    \titlespacing{\section}{0pt}{1ex}{2ex}
\tikzstyle{block} = [rectangle, draw=violet, line width=1.25pt, fill=violet!10, text centered, rounded corners,inner sep=4pt,
\tikzset{
block2/.style={
  draw=blue!99, line width=1.25pt, fill=blue!10, text centered,
  rectangle, rounded corners,
  inner sep=4pt  ,  outer sep=1.5pt,
  },
sum/.style={
  draw=blue!90, line width=1.5pt, fill=violet!10, 
  circle, 
  },
input/.style={coordinate},
output/.style={coordinate},
header/.style={draw, text centered},
pinstyle/.style={coordinate}
} 
\DeclareMathOperator*{\argmin}{arg\,min}
\theoremstyle{definition}
\newtheorem{theorem}{Theorem}
\theoremstyle{definition}
\newtheorem{definition}{Definition}
\theoremstyle{definition}
\theoremstyle{definition}
\theoremstyle{definition}
\newtheorem{lemma}{Lemma}
\theoremstyle{definition}
\newcommand{\cmark}{\ding{51}}%
\newcommand{\xmark}{\ding{55}}%
\title{\LARGE \bf
Nonlinear Optimal Control of DC Microgrids with \\
Safety and Stability Guarantees}
\author{Muratkhan Abdirash$^{1}$,~\IEEEmembership{Student Member,~IEEE}, Xiaofan Cui$^{1}$,~\IEEEmembership{Member,~IEEE}
\thanks{$^{1}$Muratkhan Abdirash and Xiaofan Cui are with the Department of Electrical and Computer Engineering, University of California, Los Angeles, Los Angeles, CA, 90095 {\tt\small mabdirash@ucla.edu, cuixf@seas.ucla.edu} (corresponding author).}
}
\begin{document}
\maketitle
\thispagestyle{empty}
\pagestyle{empty}

\begin{abstract}
A DC microgrid is a promising alternative to the traditional AC power grid, since it can efficiently integrate distributed and renewable energy resources. However, as an emerging framework, it lacks the rigorous theoretical guarantees of its AC counterpart. In particular, safe stabilization of the DC microgrid has been a non-trivial task in power electronics. To address that, we take a control theoretic perspective in designing the feedback controller with provable guarantees. We present a systematic way to construct Control Lyapunov Functions (CLF) to stabilize the microgrid, and, independently, Control Barrier Functions (CBF) to enforce its safe operation at all times.
The safety-critical controller (SCC) proposed in this work integrates the two control objectives, with safety prioritized, into a quadratic program (QP) as linear constraints, which allows for its 
online deployment using off-the-shelf convex optimization solvers. The SCC is compared against a robust version of the conventional droop control through numerical experiments whose results indicate the SCC outperforms the droop controller in guaranteeing safety and retaining stability at the same time.

\end{abstract}

\section*{Nomenclature}

\begin{tabular}{c|c}
DC &Direct current\\

AC&Alternating current\\

KCL &Kirchhoff's current law\\

KVL &Kirchhoff's voltage law\\

CPL &Constant power load\\

CCP& Continuous control property\\

CLF &Control Lyapunov function\\

CBF &Control barrier function\\

QP &Quadratic program\\

$v_j$ &Output voltage of $j$-th converter [V]\\

$i_{s_j}$ &Current supplied to $j$-th source converter [A]\\

$i_{t_j}$ &Current through $j$-th transmision line [A]\\

$v_L$ &Voltage across the aggregated load [V]\\

$C_j$ &Output capacitance of $j$-th converter [mF]\\

$R_j$ &Resistance of $j$-th transmission line [m$\Omega$]\\

$L_j$ &Inductance of $j$-th transmission line [mH]\\

$C_L$ &Load capacitance [mF]\\

$R_L$ &Load resistance [$\Omega$]\\

$P_L$ &Power rating of CPL [W]\\

$I_{\max}$ &Maximum rated current of CPL\\ 

$V_{\min}$ &Minimum rated voltage of CPL\\ 
$\mathbb{N}$ &Set of natural numbers\\
$\partial\mathcal{C}$& Boundary of a set $\mathcal{C}\subset\mathbb{R}^n$ \\
$\text{int}(\mathcal{C})$& Interior of  set $\mathcal{C}\subset\mathbb{R}^n$
\end{tabular}
\section{Introduction}
At the edge of traditional AC grids, advancements in power electronics have led to the emergence of DC microgrids, enabling more cost-effective and energy-efficient integration of DC loads across various sectors. The notable of which are telecom stations, data centers, electric vehicles, and zero-net energy buildings. 
Despite these benefits, a DC microgrid has presented a number of distinct challenges in the efforts of establishing its safe operation and stability. Those concerns include its ability to handle nonlinear loads, low system inertia, and rapid transients that are on the order of milliseconds, all induced by power electronics devices connecting the nodes of the DC microgrid \cite{shahgholian2021brief}.
\begin{figure}[t]
\centering
\includegraphics[width=0.45\textwidth,trim={7.5cm 1cm 6.5cm 0cm},clip]{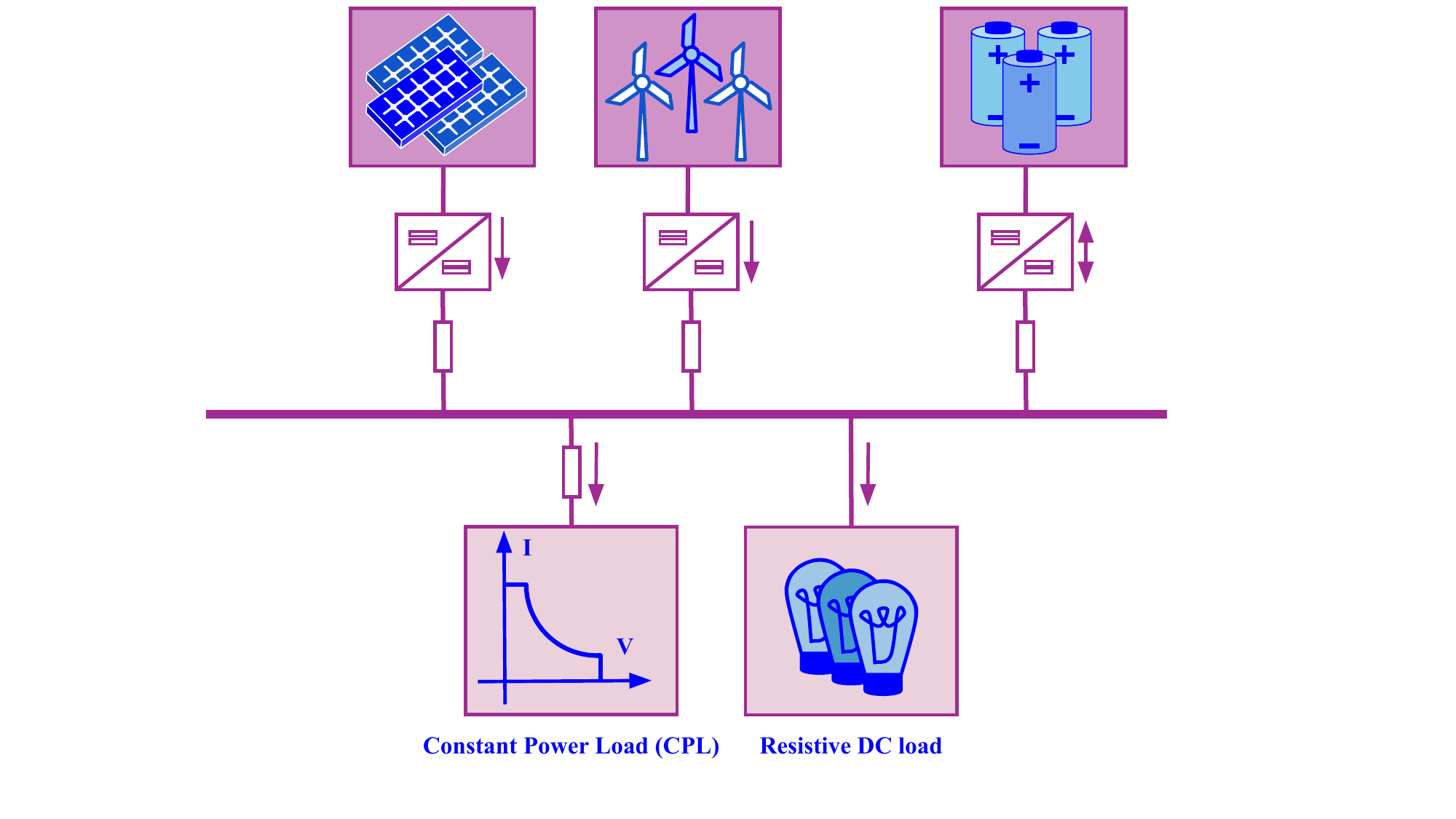} \caption{Illustrative example of a single-bus DC microgrid. It delivers the energy generated by DC power sources like solar panels and batteries to DC loads like electronic devices through one bus.}\label{fig:high-level}
\end{figure} 

\textit{DC/DC converters} serve as the critical power electronic interface in the DC microgrid, regulating voltage transitions between the source side, the bus, and the load.  Fig.\ref{fig:high-level} presents a typical architecture of a single-bus DC microgrid with DC/DC converters interlinking the nodes of the network. Because of its efficient nature, a DC/DC converter on the load side can be approximated as a constant power load (CPL) that keeps the power flowing in equal to the one flowing out \cite{chang2020large,chang2021region}.  On the flip side, CPL is non-linear and destabilzing, making it fairly easy to disturb microgrid's DC power supply system and violate the safe operating conditions of the entire network.

To address the destabilizing effects of constant power loads (CPLs), various control strategies have been developed, each offering different levels of theoretical guarantees. For instance, the controllers proposed in \cite{chang2020large,chang2021region} can provide large-signal stability, albeit contingent on the subset of model parameter values and/or restricted region of attraction. 
More computationally advanced model-based approaches, such as Model Predictive Control (MPC) \cite{zhang2022decentralized} and Active Disturbance Rejection Control (ADRC) \cite{li2024decentralized}, can additionally enforce safety constraints besides large-signal stability requirement. Their drawback comes in providing the provable guarantees of stability and safety for a bus with an arbitrary number of converters.
On the other hand, practical control strategies have been developed to enhance stability while prioritizing ease of implementation. A notable example is decentralized droop control with physical voltage, current, and power limiters \cite{10379140}. While this approach simplifies controller realization, it lacks large-signal stability guarantees, making it less robust against severe disturbances \cite{marimuthu2023review}.

Recent advances in control theory have led to the development of online controllers that solve constrained quadratic programs (QPs) to simultaneously enforce safety and stability constraints \cite{ames2016control}. Among these, the QP formulation proposed in \cite{jankovic} stands out due to its strong theoretical guarantees, ensuring safety constraint satisfaction at all times and achieving local asymptotic stability under mild conditions.
Building on this, we propose a centralized online controller for DC microgrids, where all converters are coordinated holistically as a single unit through a constrained QP framework. 


Contributions of this work, which also reveal the components of the proposed controller, are captured below: 
\begin{enumerate}
\vspace{-0.1cm}
    \item A Control Lyapunov Function (CLF) with large signal (exponential) stability guarantee for a single-bus DC microgrid; 
    \item First application, to our knowledge, of a Control Barrier Function (CBF), to enforce converter terminal voltage safety in an island DC microgrid;
    \item Online safety-critical controller (SCC) formulated as a solution to a QP, linearly constrained by CLF and CBF conditions. Any commercial convex optimization tool should be able to solve the QP in practice;
    \item Lastly, numerical experiments comparing performance of the SCC against a robust droop controller with large signal stability guarantee \cite{herrera2015stability}. The results show that the SCC can guarantee safety and local exponential stability from a larger set of initial conditions than the droop controller.
\end{enumerate}
   

\section{Preliminaries}
Consider a nonlinear \textit{control-affine} dynamical system
\begin{equation}
    \dot{x}=f(x)+g(x)u,\label{aff}
\end{equation}
where its state and control input are given by $x\in X\subseteq\mathbb{R}^n$ and $u\in U\subseteq\mathbb{R}^m$ ($x_j$ and $u_l$ refer to $j$-th entry of $x$ and $l$-th entry of $u$, respectively). Note that $X$ is called the state-space, and $U$ is the admissible set of controls, while $f:X\rightarrow\mathbb{R}^n$ is the drift vector, and $g:X\rightarrow\mathbb{R}^{n\times m}$ is the control matrix. We will now go over important notions and tools from control theory.
\begin{definition}\label{equil}
    A pair $(u^*,x^*)\in U\times X$ is an \textit{equilibrium} of system \eqref{aff} if
    \begin{equation}
        f(x^*)+g(x^*)u^*=0_{n}.
    \end{equation}
\end{definition}
\begin{definition}\label{lie_deri}
     A \textit{Lie derivative}, $L_f(V)$, is an operator that differentiates a function $V:\mathbb{R}^n\rightarrow\mathbb{R}$ along a vector field $f:\mathbb{R}^n\rightarrow\mathbb{R}^n$, which generalizes the directional derivative. The Lie derivative can be applied consecutively via $L_f^\gamma(\cdot)=\frac{\partial L_f^{\gamma-1}(\cdot)}{\partial x}f(x)$ where $L_f^1(\cdot)=L_f(\cdot)=\frac{\partial(\cdot)}{\partial x}f(x)$. Note that the gradient operator $\frac{\partial(\cdot)}{\partial x}$ produces a \textit{row vector} valued function.
\end{definition}
\begin{definition}\label{lipschitz}
    Let $V:X\subset\mathbb{R}^n\rightarrow\mathbb{R}^n$ , then $V$ is \textit{Lipschitz continuous} on the its domain $X$ if there exists $M>0$ such that
    \begin{equation*}
        ||V(x)-V(y)||\leq M||x-y||,\forall x,y\in X,
    \end{equation*}
    where $||\cdot||$ is the Euclidean norm.
\end{definition}
\begin{definition}\label{rel_deg}
    Let $h:X\rightarrow\mathbb{R}$ be an output function for the dynamic system in \eqref{aff}. Then $h(x)$ is said to have a \textit{relative degree} of $\gamma\in\mathbb{N}$, if $\gamma\geq2$ is the smallest positive integer such that
    \begin{align*}
        &L_gL^j_fh(x)=0,\forall x\in X, j=1,\dots,\gamma-2, \\
        &||L_gL^{\gamma-1}_fh(x)||>\delta, \delta>0,\forall x\in X.
    \end{align*}
\end{definition}
\begin{definition}\label{glo_sta}
    Let $k(x)$ be a Lipschitz continuous controller for \eqref{aff} such that the closed-loop system is given by 
    \begin{equation}
        \dot{x}=f(x)+g(x)k(x),\label{closedloop}
    \end{equation}
    with $k(x^*)=u^*$. Let $x(t)$ denote the solution of \eqref{closedloop} to $x_0$. Then, closed-loop system \eqref{closedloop} is {\textit{exponentially stable}} at $x^*$, if there exist $M,\lambda>0$ such that
    \begin{equation*}
        x_0\in X\Rightarrow||x(t)-x^*||\leq M||x_0-x^*||e^{-\lambda t},\forall t>0.
    \end{equation*}
\end{definition}

\begin{definition}(Simplified)\label{CLF}
    Let $V:\mathbb{R}^n\rightarrow\mathbb{R^+}$ be a differentiable, radially unbounded, and positive definite function. Then, $V(x)$ is a \textit{Control Lyapunov function (CLF)} for \eqref{aff}, if there exists $\alpha>0$ such that
    \begin{equation}
        L_gV(x)=0\Rightarrow L_fV(x)+\alpha||x||^2<0,\forall x\neq0.
    \end{equation}
\end{definition}
\begin{definition}\label{ccp}
    A CLF $V$ satisfies the \textit{continuous control property} if for any $\epsilon>0$, there exists an open set $E\subseteq X$ such that for any $x\neq x^*\in E$, there exists a $u\in\mathbb{R}^m$ satisfying $||u-u^*||<\epsilon$ and $\dot{V}(x,u)<-\alpha||x-x^*||^2$ for $\alpha>0$.
\end{definition}
\begin{definition}\label{safeee} Let $\mathcal{C}\subset X$ denote the safe region of the state-space. A safe controller renders $\mathcal{C}$ \textit{forward invariant}, i.e. $x_0\in\mathcal{C}\Rightarrow x(t)\in\mathcal{C}, \forall t>0.$ 
\end{definition}
\begin{definition}(Simplified)\label{CBF}
    Let $B:\mathbb{R}^n\rightarrow\mathbb{R}$ be differentiable in the interior of the safe set $\text{int}(\mathcal{C})$. Suppose $B(x)$ grows unbounded as $x$ approaches the boundary of the safe set $\partial\mathcal{C}$. $B(x)$ is a \textit{control barrier function (CBF)} for \eqref{aff} if there exists $\beta>0$ such that
    \begin{equation}
        L_gB(x)=0\Rightarrow L_fB(x)-\frac{\beta}{B(x)}<0,\forall x\in\text{int}(\mathcal{C}).\label{cbfcon}
    \end{equation}
\end{definition}

Lastly, we briefly describe the current-voltage characteristics (the \textit{IV curve}) of a CPL. Graphically, it resembles a hyperbola truncated at both ends. However, we limit ourselves to a simplified model of the CPL \begin{equation}\begin{cases} 
        & I_{PL}=I_{\max}, V_L\leq V_{\min},\\
        & V_L=P_L/I_{PL}, V_L\geq V_{\min}.
    \end{cases}\label{cpl}\end{equation} where $P_L$ is CPL's power level kept constant, and $I_{PL}$ and $V_L$ are the current flowing through and the voltage across of it, respectively.   

\section{Dynamics of DC microgrid}
A single-bus DC microgrid can be described as a nonlinear electrical circuit, as shown in Fig. \ref{fig:generic}, where each DC/DC converter, along with its local current control loop, is simplified as a controlled current source.
The dynamics of the microgrid in Fig.\,\ref{fig:generic} are derived by applying KCL and KVL as follows
\begin{align}
     \underbrace{\dot{\begin{bmatrix}
    v_{1}\\
    i_{t_1}\\
    \vdots\\
    v_n\\
    i_{t_n}\\
    v_L
\end{bmatrix}}}_{:=\dot{x}}&=\underbrace{\begin{bmatrix}
    -i_{t_1}/C_{1}\\
    (v_{1}-i_{t_1}R_{1}-v_L)/L_{1}\\
    \vdots\\
    -i_{t_n}/C_{n}\\
    (v_{n}-i_{t_n}R_{n}-v_L)/L_{n}\\
    (i_{t_1}+\dots+i_{t_n}-v_L/R_L-I_{PL})/{C_L}
    \end{bmatrix}}_{:=f(x)}\nonumber\\&+\underbrace{\begin{bmatrix}
    1/{C_{1}}&0&\dots& 0\\
    0&0&\dots&0\\
     \vdots & \vdots & \ddots&\vdots\\
    0&0& \dots&1/C_n\\
    0&0&\dots&0\\
    0&0&\dots&0
    \end{bmatrix}}_{:=g(x)}\underbrace{\begin{bmatrix}
        i_{s_1}\\i_{s_1}\\\vdots\\i_{s_n}
    \end{bmatrix}}_{:=u}.
    \label{original}
\end{align}
The states are given by $j$-th converter's terminal voltage $v_j$ $j=1,\dots,n$, current through $j$-th transmission line $i_{t_j}$ with $j=1,\dots,n$, and bus voltage $v_L$. The control input is given by $i_{s_j}$. We collect all the states in $x\in\mathbb{R}^{2n+1}$ and control inputs in $u\in\mathbb{R}^n$, and denote the drift dynamics with $f:\mathbb{R}^{2n+1}\rightarrow\mathbb{R}^{2n+1}$ and the control matrix with $g:\mathbb{R}^{2n+1}\rightarrow\mathbb{R}^n$. Note that \eqref{original} is a control-affine system \eqref{aff}.

\textbf{Existence and Uniqueness of Solutions}: Under the simplified CPL model in \eqref{cpl}, $f$ and $g$ are both continuously differentiable. Given that the applied controller is also Lipschitz continuous, all the sufficient conditions required for the existense and uniqueness of solutions to \eqref{original} have been satisfied.

\textbf{Equilibrium:} {Finding the 
equilibrium of DC microgrid \eqref{original} results in an underdetermined nonlinear system of equations, so there are infinitely many equilibria of the microgrid, parameterized by the current distribution and desired bus voltage.} To address that, a convex optimization problem is formulated to choose a unique equilibrium. We minimize the steady-state power loss under the desired bus voltage $x^*$:
    \begin{align} 
\min_{x_{2}^*,x_{4}^*,\dots,x_{2n}^*}&\hspace{1cm} (x_{2}^*)^2R_{1}+(x_{4}^*)^2R_{2}+\dots+(x_{2n}^*)^2R_{n}\nonumber\\
        &\text{s.t. } \displaystyle x_{2}^*+x_{4}^*+\dots+x^*_{2n}=\frac{x_{2n+1}^*}{R_L}+\frac{P_L}{x_{2n+1}^*}.\label{minequil}
    \end{align} 
    The solution to \eqref{minequil} states the contribution of a particular power source in the overall supply to the load is inversely proportional to the resistance of the transmission line connecting the source and the load.  For the desired steady-state bus voltage $x_{2n+1}^*$, the equilibrium is given by the solution to \eqref{minequil}\begin{align}  x_1^*&=x_3^*=\dots=x_{2n-1}^*\nonumber\\&=\frac{\prod_{j=1}^nR_{j}}{\sum_{j=1}^n\prod_{i=1,i\neq j}^nR_{i}}\left(\frac{x_{2n+1}^*}{R_L}+\frac{P_L}{x_{2n+1}^*}\right)+x_{2n+1}^*;\vspace{1cm} \nonumber\\ x_2^*R_{1}&=x_4^*R_{2}=\dots=x^*_{2n}R_{n}\nonumber\\&=\frac{\prod_{j=1}^nR_{j}}{\sum_{j=1}^n\prod_{i=1,i\neq j}^nR_{i}}\left(\frac{x_{2n+1}^*}{R_L}+\frac{P_L}{x_{2n+1}^*}\right);\nonumber\\ \vspace{1cm}
    u_j^*&=x_{2j}^*,j=1,\dots,n.\label{u^*}
    \end{align}

\vspace{-20pt}
\begin{figure}[h]
\centering
\begin{tikzpicture} 
\node[inner sep=0,outer sep=0] at (0,0){\includegraphics[width=0.45\textwidth,trim={2cm 2cm 1cm 10cm},clip]
{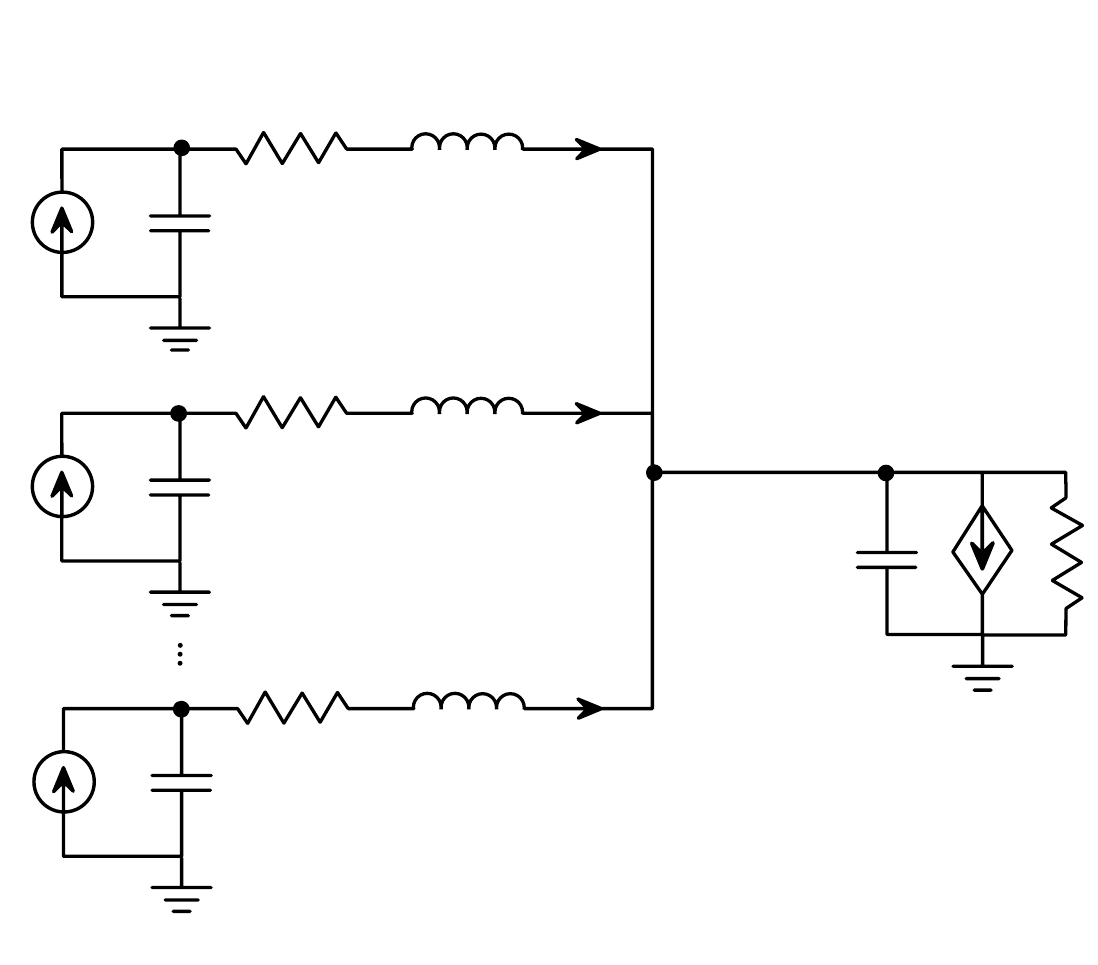}};
 \node[ ](ppp) at (-16pt,-38pt){\mbox{\footnotesize $L_n$}}; 
 \node[ ](ppp) at (-16pt,23pt){\mbox{\footnotesize $L_2$}};
 \node[ ](ppp) at (-16pt, 78pt){\mbox{\footnotesize $L_1$}};

  \node[ ](ppp) at (10pt,-39pt){\mbox{\footnotesize $i_{t_n}$}}; 
 \node[ ](ppp) at (10pt,22pt){\mbox{\footnotesize $i_{t_2}$}};
 \node[ ](ppp) at (10pt, 76pt){\mbox{\footnotesize $i_{t_1}$}};

   \node[ ](ppp) at (-75pt,-42pt){\mbox{\footnotesize $v_n$}}; 
 \node[ ](ppp) at (-75pt,20pt){\mbox{\footnotesize $v_2$}};
 \node[ ](ppp) at (-75pt, 75pt){\mbox{\footnotesize $v_1$}};

  \node[ ](ppp) at (-54pt,-38pt){\mbox{\footnotesize $R_n$}}; 
 \node[ ](ppp) at (-54pt,23pt){\mbox{\footnotesize $R_2$}};
 \node[ ](ppp) at (-54pt, 78pt){\mbox{\footnotesize $R_1$}};

  \node[ ](ppp) at (-65pt,-70pt){\mbox{\footnotesize $C_n$}}; 
 \node[ ](ppp) at (-65pt,-9pt){\mbox{\footnotesize $C_2$}};
 \node[ ](ppp) at (-65pt, 45pt){\mbox{\footnotesize $C_1$}};

  \node[ ](ppp) at (-113pt,-63pt){\mbox{\footnotesize $i_{s_n}$}}; 
 \node[ ](ppp) at (-113pt,-1pt){\mbox{\footnotesize $i_{s_2}$}};
 \node[ ](ppp) at (-113pt, 53pt){\mbox{\footnotesize $i_{s_1}$}};

  \node[ ](ppp) at (60pt,-22pt){\mbox{\footnotesize $C_L$}};
   \node[ ](ppp) at (80pt,-22pt){\mbox{\footnotesize $P_L$}};
    \node[ ](ppp) at (100pt,-22pt){\mbox{\footnotesize $R_L$}};
    \node[ ](ppp) at (69.9pt,9pt){\mbox{\footnotesize $v_L$}};
\end{tikzpicture}
\caption{Circuit schematics of a single-bus DC microgrid. $C_j$ is the output capacitance of $j$-th converter for $j=1,\dots,n$, while $C_L$ is the bus capacitance. Impedance of a transmission line is represented by its series resistance $R_{j}$ and inductance $L_{j}$ for $j=1,\dots,n$. The load is made up of an aggregated resistive load $R_L$ in parallel with a CPL with power level $P_L$.}\label{fig:generic} 
\end{figure}

\section{Large Signal Stable Controller Design}
\begin{definition}\label{QP_def}
   A \textit{quadratic program (QP)}  is a convex optimization problem with a quadratic cost function and linear constraints. It can be efficiently solved with any convex optimization tool.
\end{definition}
We now rewrite Theorem 1 \cite{jankovic} using the extension for a forced, i.e. $u^*\neq 0$, (rather than a natural one with $u^*=0$) equilibrium developed by \cite{taylor2023robust}.
\begin{theorem}\label{jankovic}
Let $V (x)$ be a CLF and $B(x)$ be a CBF for an open-loop system in \eqref{aff}. A feedback controller for the closed-loop system \eqref{closedloop}, $u:X\rightarrow U$ can be obtained by solving the following QP (Def \ref{QP_def})
\begin{align}
        &\min_{(u, \delta)\in U\times\mathbb{R}^{n}}||u-u_{FL}(x)||^2+m||\delta||^2\nonumber\\
        &\displaystyle\text{subject to }\gamma(L_{f_\eta}V_\eta(x)+\alpha||\eta(x)||^2)\nonumber\\&\hspace{2cm}+L_{g_\eta}V_\eta(x)\left(u+\delta\right) \leq\ 0\nonumber,\\
        &\hspace{1.5cm}\displaystyle L_fB(x)+L_gB(x)u-\frac{\beta}{B(x)}\leq0.
        \label{qp}
\end{align} 
where $\gamma(p):=\begin{cases}
    \frac{m+1}{m}p, p\geq0\\
     p, p<0,
\end{cases}$, $m$ is the penalty for a slack variable $\delta$, and  $u_{FL}(x)$ is a stabilizing controller obtained through feedback linearization. Then,
\begin{enumerate}
    \item The QP problem is feasible for all $x\in X$ and the resulting control law is Lipschitz continuous in every subset of $\text{int}(\mathcal{C})$ not containing the equilibrium.
    \item $\dot{B}(x)=L_fB(x)+L_gB(x)u\leq\frac{\beta}{B(x)}$ for all $x\in\text{int}(\mathcal{C})$ and the set $\text{int}(\mathcal{C})$ is forward invariant.
    \item If the barrier constraint is inactive and if we select $\gamma m = 1$ the control law achieves $\dot{V}_\eta(x)=L_{f_\eta}V_\eta(x)+L_{g_\eta}V_\eta(x)u\leq-\alpha||\eta(x)||^2.$
    \item If $x^*\in\text{int}(\mathcal{C})$ and the CLF $V_\eta$ has the CCP, then the barrier constraint is inactive around $x^*$, the control law is continuous at $x^*$, and the closed loop system is locally exponentially stable.
\end{enumerate}
\end{theorem}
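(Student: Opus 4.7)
My strategy is to examine the KKT conditions of the QP in \eqref{qp} and, one claim at a time, leverage the defining properties of $V_\eta$ and $B$. For feasibility in claim (1), the slack $\delta$ enters only the CLF constraint, so feasibility decouples: choose $u$ to satisfy the CBF constraint, then choose $\delta$ to absorb any CLF violation. The CBF constraint is satisfiable because Definition~\ref{CBF} guarantees $L_fB-\beta/B<0$ whenever $L_gB=0$, while $L_gB\cdot u$ can be driven arbitrarily negative when $L_gB\neq 0$; analogously $L_{g_\eta}V_\eta\cdot\delta$ handles any residual CLF slack when $L_{g_\eta}V_\eta\neq 0$, and Definition~\ref{CLF} covers $L_{g_\eta}V_\eta=0$. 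For Lipschitz continuity on compact subsets of $\text{int}(\mathcal{C})$ excluding $x^*$, I would invoke standard parametric-QP sensitivity results (e.g.\ Robinson's implicit function theorem): the cost is strongly convex in $(u,\delta)$, the constraint data depend smoothly on $x$, and LICQ holds away from $x^*$.

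For claim (2) the CBF inequality is imposed as a hard constraint, so $\dot B(x)\leq\beta/B(x)$ along every closed-loop trajectory. Integrating $B\dot B\leq\beta$ gives $B^2(t)\leq B^2(0)+2\beta t$, which is finite for any finite $t$. Because $B\to\infty$ as $x\to\partial\mathcal{C}$, this uniform bound prevents trajectories from reaching $\partial\mathcal{C}$, so $\text{int}(\mathcal{C})$ is forward invariant.

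For claim (3), when the CBF constraint is inactive the QP reduces to minimizing $\|u-u_{FL}\|^2+m\|\delta\|^2$ subject only to the CLF constraint. KKT stationarity yields $u-u_{FL}=-\frac{\lambda}{2}(L_{g_\eta}V_\eta)^\top$ and $m\delta=-\frac{\lambda}{2}(L_{g_\eta}V_\eta)^\top$, so $\delta=(u-u_{FL})/m$. Substituting into the active CLF equality and performing a case analysis on the sign of $p:=L_{f_\eta}V_\eta+\alpha\|\eta\|^2$ to resolve the piecewise $\gamma$, one simplifies $\dot V_\eta=L_{f_\eta}V_\eta+L_{g_\eta}V_\eta u$; the choice $\gamma m=1$ is precisely the identity that collapses the algebra to $\dot V_\eta\leq-\alpha\|\eta\|^2$.

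For claim (4), $x^*\in\text{int}(\mathcal{C})$ implies $B(x^*)$ is finite and strictly positive, so $L_fB(x^*)+L_gB(x^*)u^*-\beta/B(x^*)<0$ strictly, and by continuity the CBF constraint remains inactive on some neighborhood $E$ of $x^*$. Within $E$, the CCP of $V_\eta$ supplies, for any $\epsilon>0$, a feasible $u$ within $\epsilon$ of $u^*$, so $u(x)\to u^*$ as $x\to x^*$, yielding continuity at the equilibrium; exponential stability then follows from the quadratic decay in claim (3) together with positive definiteness of $V_\eta$ via a standard Lyapunov argument. The main obstacle I anticipate is the algebraic bookkeeping in claim (3): reconciling the piecewise $\gamma$ with the KKT relation $\delta=(u-u_{FL})/m$ and verifying that $\gamma m=1$ cancels the slack's contribution to $\dot V_\eta$ in both branches requires careful sign tracking.
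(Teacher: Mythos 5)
The paper does not actually prove this theorem itself: its ``proof'' is a pointer to Theorem~7 of the cited reference, so your KKT-based reconstruction is the right family of argument, and your treatments of claims (1) and (2) --- feasibility by decoupling the slack from the hard CBF constraint, and forward invariance by integrating $B\dot{B}\le\beta$ against the blow-up of $B$ at $\partial\mathcal{C}$ --- match the standard proofs. There is, however, a genuine gap in your plan for claim (3), precisely at the step you flag as ``algebraic bookkeeping.'' Write $a:=L_{g_\eta}V_\eta(x)$ and $p:=L_{f_\eta}V_\eta+\alpha\|\eta\|^2$. Your own KKT relations give $u=u_{FL}-\tfrac{\lambda}{2}a^{\top}$ and $\delta=(u-u_{FL})/m$, and substituting into the active CLF constraint yields, in the branch $p\ge0$, the identity $\dot{V}_\eta+\alpha\|\eta\|^2=\tfrac{1}{m+1}\,a\,u_{FL}$. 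No choice of $m$ and no sign-tracking of $p$ makes this residual nonpositive on its own: the argument closes only because $u_{FL}$ is not an arbitrary nominal input but the feedback-linearizing controller of Lemma~\ref{lemma1}, which already certifies $L_{f_\eta}V_\eta+L_{g_\eta}V_\eta u_{FL}\le-\alpha\|\eta\|^2$ (this needs $\alpha\le\lambda_{\min}(Q)$), whence $a\,u_{FL}\le-p\le0$. Your proposal never invokes this property of $u_{FL}$, so as written claim (3) would not go through.

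A smaller but real gap sits in claim (4): the CCP supplies a \emph{feasible} $u$ within $\epsilon$ of $u^*$, but continuity of the control law concerns the \emph{minimizer}. You additionally need that the cost is centered at $u_{FL}(x)$ with $u_{FL}$ continuous and $u_{FL}(x^*)=u^*$, so that the optimal value of \eqref{qp} is squeezed to zero as $x\to x^*$ and the minimizer is therefore forced to $u^*$; the existence of a nearby feasible point alone does not imply this. With those two repairs your outline tracks the cited proof; the remaining items (parametric-QP sensitivity for Lipschitz continuity away from $x^*$, and the Lyapunov argument for local exponential stability once the barrier constraint is shown inactive near $x^*$) are standard and correctly placed.
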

\begin{proof}
    Please refer to the proof of Theorem 7 \cite{taylor2023robust}. The particular choice of $m$ is up to a designer, and Section 3 \cite{taylor2023robust} details the selection guidelines. Note that the resulting control law is Lipschitz continuous of order $m$. 
\end{proof}
\begin{definition}\label{feed_lin_def}
    Let $(x^*,u^*)$ be an equilibrium pair for an open-loop system \eqref{aff}. The system is \textit{feedback linearizable} with respect to the equilibrium, if there exist 
    \begin{itemize}
        \item a diffeomorphism $\Phi:X\rightarrow\mathbb{R}^n$ between $x\in X$ and $(\eta,z)\in\Phi(X)$ such that $\Phi(x)=(\eta(x),z(x))$ and $\Phi(x^*)=0_n$,
        \item constant $r\in\mathbb{N}$ with $r\leq n$,
        \item a controller $k_{fbl}:X\times\mathbb{R}^m\rightarrow\mathbb{R}^m$ that is Lipschitz continuous on $X\times\mathbb{R}^m$ and $k_{fbl}(x^*,0_m)=u^*$,
        \item a controllable pair $(F,G)\in\mathbb{R}^{r\times r}\times\mathbb{R}^{r\times k}$,
        \item $f_\eta:X\rightarrow\mathbb{R}^r$, $g_\eta:X\rightarrow\mathbb{R}^{r\times m}$, $z:X\rightarrow\mathbb{R}^{n-r}$ that are continuously differentiable
    \end{itemize}
    such that for all $x\in X$ the following hold
    \begin{enumerate}
        \item \begin{equation}
            \begin{bmatrix}
                f_\eta(x)\\\dot{z}(x)
            \end{bmatrix}=
                \frac{\partial\Phi}{\partial x}(x)f(x),\begin{bmatrix}
                    g_\eta(x)\\0_{n-r}
                \end{bmatrix}=\frac{\partial\Phi}{\partial x}(x)g(x);
        \end{equation}
        \item \begin{equation}
            f_\eta(x)+g_\eta(x)k_{fbl}(x,v)=F\eta(x)+Gv.
        \end{equation}
    \end{enumerate}
\end{definition}
    We embed our control objective of regulating the states to their desired steady-state values into the definition of the \textit{output functions}: 
    \begin{align}
        &h_0(x):=x_{2n+1}-x_{2n+1}^*,\nonumber\\
        &h_j(x):=x_{2j-1}-x_{2j+1}, \displaystyle j=1,\dots,\displaystyle n-1.\label{hj}
    \end{align}
    Note that $h_0$ measures how far the bus voltage is from its desired steady-state value, while $h_j$ measures the difference between the terminal voltages of converters $j$ and $j+1$. They are both motivated by \eqref{u^*}. 

\begin{definition}\label{out_def}
Recall Def. \ref{lie_deri} and consider the evolution of our objective functions, $\{h_j\}_{j=0}^{n-1}$ along the solutions of \eqref{original} expressed using Lie derivatives, then the \textit{output dynamics} is given by 
\begin{equation}
    \underbrace{\Dot{\begin{bmatrix}
    h_0(x)\\
    L_fh_0(x)\\
    L^2_fh_0(x)\\
    h_1(x)\\
    \vdots\\
    h_{n-1}(x)
\end{bmatrix}}}_{:=\dot{\eta}(x)}=\underbrace{\begin{bmatrix}
    L_fh_0(x)\\
    L^2_fh_0(x)\\
    L^3_fh_0(x)\\
    L_fh_1(x)\\
    \vdots\\
    L_fh_{n-1}(x)
    \end{bmatrix}}_{:=f_\eta(x)}+\underbrace{\begin{bmatrix}
   0\\
    0\\
    L_gL^2_fh_0(x)\\
    L_gh_1(x)\\
    \vdots\\
    L_gh_{n-1}(x)
    \end{bmatrix}}_{:=g_\eta(x)}u
    \label{output},
\end{equation}
 where the output state is $\eta\in\mathbb{R}^{n+2}$ with drift $f_\eta:X\rightarrow\mathbb{R}^{n+2}$ and control matrix $g_\eta:X\rightarrow\mathbb{R}^{n+2}$. Note that all the functions involved are continuously differentiable, so a Lie derivative can be applied consecutively. Recall Def. \ref{rel_deg},
and, note that $h_0$ has a relative degree of three, while $h_j$ has a relative degree of one for all $j=1,...,n-1.$ 
\end{definition}
\begin{definition}\label{zero_def}
    Let 
     \begin{equation}
         z_j(x):=R_{j}x_{2j}-R_{j+1}x_{2j+2}, j=1,\dots,n-1,\label{zero}
     \end{equation}
     be a function measuring the current sharing between input sources for \eqref{original}, motivated by the optimal steady-state values found in  \eqref{u^*}.
      Then, $z_j's$ are stacked onto $z\in\mathbb{R}^{n-1}$ and together with its evolution
      \begin{equation}
          \dot{z}(x)=L_fz(x),\label{zero_dyn}
      \end{equation} are referred to as \textit{zero dynamics}. 
\end{definition}

\begin{lemma}\label{lemma1}
    The output dynamics \eqref{output} together with the zero dynamics \eqref{zero_dyn} satisfy the conditions for feedback linearization given in Def. \ref{feed_lin_def}, and a CLF for the output dynamics is a CLF for the original system \eqref{original}:
    \begin{equation}\vspace{0.1cm}
        V_\eta(\eta)=\eta^TP\eta, \label{clf_eta_clf}
    \end{equation}
     with $P^T=P\in\mathbb{R}^{(n+2)\times (n+2)}$ positive definite being the solution to the Lyapunov equation  $A_{cl}^TP+PA_{cl}=-Q$ for some positive definite $Q^T=Q\in\mathbb{R}^{(n+2)\times (n+2)}$ and control matrix $K\in\mathbb{R}^{n\times(n+2)}$ of our choice such that
     \begin{equation}
         A_{cl}:=F+GK\label{acl}
     \end{equation}
     has eigenvalues with negative real parts.
\end{lemma}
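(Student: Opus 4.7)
My plan is to establish the lemma in two stages. First, I will verify that the coordinate change built from \eqref{output} and \eqref{zero} satisfies each bullet of Def.\ \ref{feed_lin_def}, thereby exhibiting the system as feedback linearizable. Second, I will show that the quadratic form $V_\eta(\eta)=\eta^TP\eta$ produced from the Lyapunov equation satisfies the CLF condition of Def.\ \ref{CLF} along the output dynamics \eqref{output}, which, via the linearizing change of coordinates, is exactly the object Theorem \ref{jankovic} requires of the QP in \eqref{qp}.

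\textbf{Feedback linearization.} I will take the candidate diffeomorphism to be $\Phi(x):=(\eta(x),z(x))$, whose codomain has dimension $(n+2)+(n-1)=2n+1$, matching the state space. Invertibility of $\Phi$ will follow by a rank argument on $\partial\Phi/\partial x$: the voltage coordinates $x_{2j-1}$ are recovered from the $n-1$ differences $h_j$ together with the weighted sum $\sum_j x_{2j-1}/L_j$ that appears inside $L_f^2h_0$; the currents $x_{2j}$ are recovered from the $n-1$ weighted differences $z_j$ together with the sum $\sum_j x_{2j}$ contained in $L_fh_0$; and $x_{2n+1}$ is read off $h_0$ directly. I then identify $(F,G)$ as the Brunovsky canonical pair consisting of one triple-integrator block for the relative-degree-three output $h_0$ and $n-1$ scalar pass-through blocks for the $h_j$'s, so that controllability follows by inspection. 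Finally, I define the linearizing controller $k_{fbl}(x,v):=D(x)^{-1}(v-b(x))$, where $b(x):=(L_f^3h_0,L_fh_1,\dots,L_fh_{n-1})^T$ and $D(x)$ is the $n\times n$ decoupling matrix formed from the nonzero rows of $g_\eta$; direct substitution into $f_\eta+g_\eta k_{fbl}$ then yields $F\eta+Gv$, and Lipschitz continuity of $k_{fbl}$ inherits from that of $D^{-1}$ on the region where $D$ is invertible.

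\textbf{CLF property.} Because $(F,G)$ is controllable, a gain $K$ can be chosen so that $A_{cl}=F+GK$ is Hurwitz, after which the standard Lyapunov theorem supplies a unique positive definite $P$ solving $A_{cl}^TP+PA_{cl}=-Q$. The resulting $V_\eta(\eta)=\eta^TP\eta$ is then positive definite and radially unbounded on $\mathbb{R}^{n+2}$. To verify the CLF inequality, I compute $L_{g_\eta}V_\eta(\eta)=2\eta^TPG$ and $L_{f_\eta}V_\eta(\eta)=\eta^T(F^TP+PF)\eta$; whenever $L_{g_\eta}V_\eta=0$, the row vector $\eta^TPG$ vanishes, so the cross term $2\eta^TPGK\eta=(2\eta^TPG)(K\eta)$ also vanishes, reducing $L_{f_\eta}V_\eta$ to $\eta^T(A_{cl}^TP+PA_{cl})\eta=-\eta^TQ\eta\leq-\lambda_{\min}(Q)||\eta||^2$, which matches Def.\ \ref{CLF} with $\alpha=\lambda_{\min}(Q)$. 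Since the stability constraint in \eqref{qp} is phrased purely through Lie derivatives of $V_\eta$ along \eqref{output}, this verified property is exactly what the lemma requires for the original system under the feedback-linearization framework.

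\textbf{Main obstacle.} The technically most delicate step will be proving that $D(x)$ is invertible on the entire region of interest. The bottom $n-1$ rows of $D(x)$ have a tridiagonal pattern in the capacitance reciprocals $\{1/C_j\}$ with zero column-sum, while the top row consists of the strictly positive entries $\{1/(C_LL_jC_j)\}$, so intuitively the top row cannot lie in the row span of the rest. Making this rigorous, either by a cofactor expansion or by an elementary column-rescaling reduction, will also, as a byproduct, promote $\Phi$ from a local to a global diffeomorphism on the physically meaningful domain, since the same structural independence underlies the non-singularity of $\partial\Phi/\partial x$ asserted earlier.
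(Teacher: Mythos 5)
Your overall route is the same as the paper's: exhibit $\Phi=(\eta,z)$ as a diffeomorphism, cancel the nonlinearities through the inverse of the decoupling matrix to reach the Brunovsky pair $(F,G)$, choose $K$ so that $A_{cl}=F+GK$ is Hurwitz, and read $P$ off the Lyapunov equation. Your explicit verification of the CLF inequality is actually more complete than the paper's: you check Def.~\ref{CLF} directly by noting that $L_{g_\eta}V_\eta=2\eta^TPG=0$ kills the cross term $2\eta^TPGK\eta$, leaving $\eta^T(A_{cl}^TP+PA_{cl})\eta=-\eta^TQ\eta\leq-\lambda_{\min}(Q)\|\eta\|^2$, whereas the paper only appeals to converse Lyapunov theory. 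You also correctly flag invertibility of the decoupling matrix as the delicate structural fact; the paper simply asserts that $\tilde{g}_\eta$ is invertible for all parameter values.

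However, there is one genuine gap. The lemma claims that a CLF for the \emph{output} dynamics is a CLF for the \emph{original} system \eqref{original}, and $\eta$ only spans $n+2$ of the $2n+1$ state coordinates. Driving $\eta\to 0$ says nothing about the remaining $n-1$ coordinates $z_j=R_jx_{2j}-R_{j+1}x_{2j+2}$ unless the zero dynamics \eqref{zero_dyn} is itself stable. Your proof constructs $z$ as part of the diffeomorphism but never examines its evolution. The paper closes this by showing that, once $\eta$ is held at $\eta^*$, each $z_j$ decays as a stable first-order system with eigenvalue $-\tfrac{R_j+R_{j+1}}{L_j+L_{j+1}}$, and then invokes a composite Lyapunov function $V(x)=\sigma V_\eta(\eta(x))+\tfrac{1}{2}\|z(x)\|^2$ with $\sigma$ sufficiently large to certify stability of the full-order closed loop. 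Without this minimum-phase argument (or an equivalent ISS-of-the-zero-dynamics argument), your proof establishes only that $V_\eta$ is a CLF for the reduced $(n+2)$-dimensional output subsystem, not the stated conclusion about system \eqref{original}.
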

\begin{proof}
Let $\tilde{g}_\eta\in\mathbb{R}^{n\times n}$ denote the control matrix of \eqref{output} with its first two identically zero rows removed. Similar, define $\tilde{f}_\eta\in\mathbb{R}^{n\times n}$ to be the drift vector from \eqref{output} with its first two entries removed. Importantly, $\tilde{g}_\eta$ comes out as an invertible matrix for all values of model parameters. Thus, we can apply a modified control input, $u=\tilde{g}_\eta^{-1}(-\tilde{f}_\eta+v)$ to cancel out the nonlinearities and arrive at
\begin{equation}
    \dot{\eta}=\underbrace{\begin{bmatrix}
            0_{n+1,1}&I_{n+1}\\0& 0_{n+1,1}^T
        \end{bmatrix}}_{:=F}\eta+\underbrace{\begin{bmatrix}0_{2,n}\\I_{n}
        \end{bmatrix}}_{:=G}v\hspace{-0.2cm}\label{linear}
\end{equation}
where $v$ is now the new control input. As $(F,G)$ is a controllable pair, there exist infinitely many $K$ such that the closed loop matrix $A_{cl}$ has eigenvalues with negative real parts. Then, we can pick such $K$, and embed it in the feedback linearizing controller to be applied to the output dynamics
\begin{equation}
    u_{FL}(x)= \tilde{g}_\eta^{-1}(-\tilde{f}_\eta+K\eta)\label{ufl}
\end{equation}Hence, the resulting output dynamics $\dot{\eta}=A_{cl}\eta$ is exponentially stable around $\eta^*=0_n.$ Once $\eta$ has been stabilized to $\eta^*$, we can show that $z_j$ evolves as a stable linear system with an eigenvalue of $\scriptstyle-\frac{(R_{j}+R_{j+1})}{(L_{j}+L_{j+1})}.$ Crucially, the coordinate transform, $\Phi$, from the original system \eqref{original} to the output+zero dynamics \eqref{output} such that $(\eta,z)=\Phi(x)$ is a diffeomorphism, since the Jacobian, $\frac{\partial \Phi}{\partial x}$, is full rank, implying  invertibility. 

    Recall the output dynamics was exponentially stable under the appropriate choice of the feedback gain $K$. The converse Lyapunov theory implies the existence of a quadratic Lyapunov function, $V_\eta(\eta)=\eta^TP\eta$, with $P^T=P$ being a positive definite matrix of dimension $n+2$. $P$ is obtained by solving the Lyapunov equation: $A_{cl}^TP+PA_{cl}=-Q$ for some positive definite $Q^T=Q$ of our choice. 

    The zero dynamics is a stable autonomous system, so its quadratic norm $V_z(z)=\frac{1}{2}||z||^2$ is a valid Lyapunov function. A weighted average of $V_\eta$ and $V_z$ can serve as a Lyapunov function to the original system \eqref{original} certifying its stability: 
$V(x)=\sigma V_\eta(\eta(x))+V_z(z(x))$ 
is just that for $\sigma$ large enough \cite{freeman2008robust}.
     Most notably, $V_\eta$ can effectively be used as a CLF to find more efficient controllers. In fact, any controller $u$ satisfying \begin{equation}
         L_fV_\eta+L_gV_\eta u\leq -\lambda_{\min}(\bar{Q})||\eta||^2,\label{clf_eta}
     \end{equation} will guarantee large signal stability of the closed-loop dynamics \eqref{closedloop} while maintaining Lipschitz continuity of $u$.
\end{proof}

\definecolor{amethyst}{rgb}{0.6, 0.4, 0.8}
\definecolor{blue(pigment)}{rgb}{0.2, 0.2, 0.6}
\definecolor{blue(pigment2)}{rgb}{0.2, 0.2, 0.4}
\begin{figure}[t]
        \centering
        \scriptsize
        \begin{tikzpicture}[auto,>=latex']
        \node[input]at (0pt,0pt)(input){};
            \node [block,left = 0pt of input](original) {$\color{violet}\dot{x}=f(x)+g(x)u$} ;

            \node[above = 0.01\linewidth of original]{\textcolor{violet}{DC microgrid dynamics}};
            \node [block, right = 0.225\linewidth of original] (output) {$\color{violet}\begin{bmatrix}
                 \dot{\eta}\\\dot{z}
             \end{bmatrix}=\begin{bmatrix}
                 f_\eta(x)\\f_z(x)
             \end{bmatrix}+\begin{bmatrix}
                 g_\eta(x)\\0
             \end{bmatrix}u$};
            \node[above = 0.01\linewidth of output]{\textcolor{violet}{Output+zero dynamics}};
            \draw [<->,amethyst,line width=1.5pt] (original) -- node[name=phi] {$\Phi(x)=(\eta,z)$} (output);
            \node[below = 0.01\linewidth of phi] {\textcolor{amethyst}{Diffeomorphism}};
             \node [block, below = 0.35\linewidth of output] (stable) {$\color{violet}\begin{bmatrix}\dot{\eta}\\\dot{z}\end{bmatrix}=\begin{bmatrix}
                 A_{cl}\eta\\ \tilde{f}_z(\eta)
             \end{bmatrix}$};
             \node[below = 0.01\linewidth of stable]{\textcolor{violet}{Stable dynamics}};
             \draw [->,amethyst, line width=1.5pt] (output) -- node[name=phi2] {\textcolor{amethyst}{pole }$\color{amethyst}u_\eta=K\eta$} (stable);
            \node[left = 0pt of phi2] {}; 
            \node[below right = 0pt of phi2,name=ofqp34]{\hspace{-55pt}\textcolor{amethyst}{placement}};
            \node[above right = 0pt of phi2](l) {\hspace{-55pt}\textcolor{amethyst}{linearizing}}; 
            \node[above right = 0pt of l] (f){\hspace{-55pt}\textcolor{amethyst}{Feedback}}; 
             \node [block, below = 0.4\linewidth of original] (clf) {$\color{violet}L_fV_\eta+L_gV_\eta u\leq -\lambda||\eta||^2$};
             \node[below = 0.01\linewidth of clf]{\textcolor{violet}{CLF}};
            \draw [<->,amethyst, line width=1.5pt] (clf) -- node[name=phi4] {$V=\eta^TP\eta$} (stable);
            \node[below = 0.01\linewidth of phi4,name=phi5] {\textcolor{amethyst}{Lyapunov theory}};

            \node [block, below = 0.08\linewidth of original] (qp) {$\color{violet}\min_{(u,\delta)}||u||^2+m||\delta||^2$};
            \node[ ](ppp) at (-10pt,-55pt){\textcolor{violet}{QP}}; 

            \node[above =.17\linewidth of clf,name=ofqp2]{};
            \draw [->,amethyst, line width=1.5pt] (clf) -- node[name=phi7] {\hspace{-40pt}\textcolor{amethyst}{Relaxed}} (qp);
            \node[below left = 0pt of phi7,name=ofqp3]{\hspace{-20pt}\textcolor{amethyst}{constraint}};
            \node [block2,right = 0.05\linewidth of phi7,draw=blue(pigment)] (safe) {$\color{blue(pigment)}\{x|v_{\min}\leq x\leq v_{\max}\}$};
            \node[below = 0pt of safe]{\textcolor{blue(pigment)}{Safe set} $\color{blue(pigment)}\mathcal{C}$};
            \node [block2, right= 0.07\linewidth of qp, draw=blue(pigment)] (cbf) {$\color{blue(pigment)}L_fB+L_gBu\leq \frac{\beta}{B}$};
            \node[above= 0.01\linewidth of cbf]{\textcolor{blue(pigment)}{CBF}};
            
\node [anchor=south,below = 0pt of cbf,name= safety,outer sep = 0pt,inner sep=0pt]{};
\node [name= safety2, below right= 5pt of safety] {\textcolor{amethyst}{Barrier }};
\node [name= safety3, below = 0pt of safety2] {\textcolor{amethyst}{function}};
            
            \draw [->,amethyst, line width=1.5pt, rounded corners] 
             (safe)++(42pt,0) -| node[name=phi3]{} (safety);
             \node[name=phi4,right = 0pt of phi3]{$\color{amethyst}B(x)\geq 0$};

            \node[above = 0.075\linewidth of qp,name=ofqp]{};
            \draw [->,amethyst, line width=1.5pt] (qp) -- node[name=phi8]{Control law}(original);
           \node[right = 0pt of phi8,name=phi9]{$\color{amethyst}u^*(x)$};
            \node[above = 0.075\linewidth of cbf,name=ofqp]{};
            \draw [->,amethyst, line width=1.5pt] (cbf) -- node[name=phi6] {\textcolor{amethyst}{}} (qp);
            \node[below = 0pt of phi6,name=ofqp,outer sep=0pt,inner sep=2pt]{\textcolor{amethyst}{Strict}};
            \node[outer sep=0pt,below = 0pt of ofqp,name=ofqp45,inner sep=2pt]{\textcolor{amethyst}{constraint}};

        \end{tikzpicture}
         \caption{Flow chart explaining the safety critical controller (SCC) for a single-bus DC microgrid. Note that the violet blocks on the outer loop are for stabilization, while blue blocks represent safety.}\label{fig:flow_chart}
\end{figure}
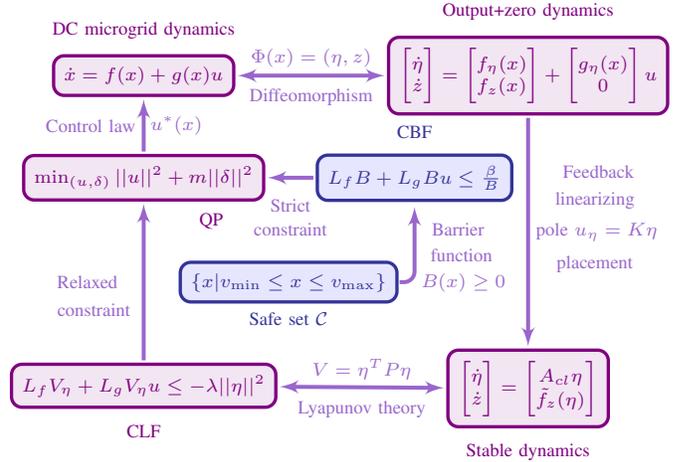

\section{Safety Critical Controller Synthesis}
Not only stability, but also the \textit{safe operation} of a DC microgrid is of utmost importance when deploying it for commercial use. The safety constraints we impose include the lower and upper bounds of the DC/DC converter output voltages, $v_j$, for $j = 1, \dots, n$. The converter terminal voltage can fluctuate significantly during contingencies such as system black startup, intermittent generation, or over/under loading. Violations of these terminal voltage constraints may lead to load shedding, generator tripping, or even cascading failures across the entire DC microgrid. In control-theoretic terms, this can be interpreted as maintaining the system state within a predefined safe region of the state space.
Similar to the CLF-based approach we used for stabilization, we will construct CBFs to ensure that the safety constraints are satisfied at all times. 
\begin{lemma}\label{lemma2}
    Suppose the safe region of the state-space is given by \begin{equation}
        \mathcal{C}:=\{x|v_{j,\min}\leq v_j\leq v_{j,\max}, j=1,\dots,n\},
    \end{equation}
    where $v_{j,\max}$ and $v_{j,\min}$ refer to strict upper and lower bounds. Then, there exist CBFs
    \begin{equation}
     B_j(x):=-\frac{1}{(v_{j}-v_{j,\min})(v_{j}-v_{j,\max})}\label{def_cbd}
    \end{equation} for $\mathcal{C}$ along the dynamics of \eqref{original}.
\end{lemma}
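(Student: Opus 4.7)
The plan is to check the three ingredients of Definition \ref{CBF} in turn for each candidate $B_j$, and the key observation that will make the proof work is that $B_j$ depends on $x$ only through the single coordinate $v_j = x_{2j-1}$.

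First I would verify the easy analytic properties. On $\mathrm{int}(\mathcal{C})$ we have $v_{j,\min} < v_j < v_{j,\max}$, so $(v_j - v_{j,\min})(v_j - v_{j,\max}) < 0$, which makes $B_j$ smooth and strictly positive there. As $v_j \to v_{j,\min}^+$ or $v_j \to v_{j,\max}^-$, this product tends to $0^-$, so $B_j \to +\infty$, giving the required blow-up at $\partial\mathcal{C}$.

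The substantive step is the CBF inequality. Because $B_j$ depends only on $v_j$, the gradient $\partial B_j/\partial x$ has its single nonzero entry in position $2j-1$, equal to $\partial B_j/\partial v_j$. Inspecting the control matrix in \eqref{original}, row $2j-1$ of $g(x)$ is zero except in column $j$, where it is $1/C_j$. Hence $L_g B_j$ has at most one nonzero component, the $j$-th, equal to $(1/C_j)\,\partial B_j/\partial v_j$. A one-line differentiation gives $\partial B_j/\partial v_j = [2v_j - (v_{j,\min}+v_{j,\max})]/[(v_j-v_{j,\min})(v_j-v_{j,\max})]^2$, so $L_g B_j(x) = 0$ precisely on the midpoint hyperplane $\{v_j = (v_{j,\min}+v_{j,\max})/2\}$.

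This is where the structural coincidence saves us and constitutes the main (mild) obstacle: on the same midpoint hyperplane, $L_f B_j = (\partial B_j/\partial v_j)\, f_{2j-1}(x)$ also vanishes, since $B_j$ is a function of $v_j$ alone and both Lie derivatives factor through $\partial B_j/\partial v_j$. Consequently, the CBF inequality $L_f B_j(x) - \beta/B_j(x) < 0$ collapses to $-\beta/B_j(x) < 0$, which holds for any $\beta > 0$ because on that hyperplane $B_j$ equals the positive constant $4/(v_{j,\max}-v_{j,\min})^2$. The takeaway is that control authority over $B_j$ is lost only at states where the drift component in the direction of $\nabla B_j$ is zero as well, so no uncontrolled growth of $B_j$ is possible; once this alignment is spotted, the choice of $\beta$ is unconstrained and any positive value certifies $B_j$ as a CBF for the slab constraint $v_{j,\min} \le v_j \le v_{j,\max}$.
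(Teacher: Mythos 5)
Your proof is correct, and it follows the same overall strategy as the paper (directly verifying Definition \ref{CBF} for each $B_j$), but it is more careful at exactly the point where care is needed. The paper's proof dismisses the CBF implication by asserting that $L_gB_j\neq 0$ for all $x\in\text{int}(\mathcal{C})$, which would make the hypothesis of \eqref{cbfcon} vacuous; as your computation shows, that assertion is false: since $B_j$ depends on $x$ only through $v_j=x_{2j-1}$ and row $2j-1$ of $g$ has its single nonzero entry $1/C_j$ in column $j$, one gets $L_gB_j = (1/C_j)\,\partial B_j/\partial v_j\, e_j^T$, which vanishes precisely on the midpoint hyperplane $v_j=(v_{j,\min}+v_{j,\max})/2$. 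You close that case by observing that $L_fB_j$ factors through the same scalar $\partial B_j/\partial v_j$ and therefore also vanishes there, reducing \eqref{cbfcon} to $-\beta/B_j<0$, which holds since $B_j=4/(v_{j,\max}-v_{j,\min})^2>0$ at the midpoint. Together with your verification of positivity, smoothness, and blow-up of $B_j$ at $\partial\mathcal{C}$, this is a complete and in fact tighter argument than the one in the paper: it repairs the erroneous nonvanishing claim rather than relying on it. (One consequence worth noting for the downstream use in Theorem \ref{ours}: the pointwise QP constraint $L_fB_j+L_gB_ju\leq\beta/B_j$ remains feasible on the midpoint hyperplane only because $L_fB_j=0$ there, which is exactly the alignment your proof makes explicit.)
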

\begin{proof}
 We can equivalently define the safe set, $\mathcal{C}$, as the intersection of 0-super level sets of quadratic functions. Rewrite $b_j(x)=-(x_{2j-1}-x_{2j-1,\min})(x_{2j-1}-x_{2j-1,\max})$, then
$\mathcal{C}=\bigcap_{j=1,\dots,n}\{x|b_j(x)\geq0\}\label{cbf}$.

Then, $B_j(x)=\frac{1}{b_j(x)}$ functions are candidate CBFs. To confirm that they are, recall Def. \ref{CBF}, and note that $L_gB_j\neq0$ for all $x\in\text{int}(\mathcal{C})$ given that 
$v_{j,\max} \neq v_{j,\min}$. We can characterize a set of safe controllers $u$ by the solution to the following constraints 
\begin{align}
    L_fB_j+L_gB_ju\leq \frac{\beta}{B_j}, j=1,\dots,n.
\end{align} 
\end{proof}
\vspace{-15pt}
\begin{table}[hbpt]
    \centering
    \caption{DC microgrid parameters in the numerical experiment: (from left to right) output capacitance of the input converter $ i=1,\dots,5$; series inductance, followed by the series resistance of the transmission line $i=1,\dots,5$; constant power rating of the CPL connected to the bus.}
    \begin{tabular}{c|c|c|c|c}
        Node $i$& $   C_i $  [mF] & $   L_i$  [H] & $   R_i$  [m$\Omega$] & $   P_i$  [W] 
       \\
     \hline
         1&   0.49 &   0.09 &   8.78&  -\\ 
        2&   0.47 &    0.08  &    17.78 &   - \\ 
         3 &    0.49  &    0.09  &    16.78 &   - \\ 
         4 &    0.57  &    0.09  &    19.78 &   - \\ 
         5 &    0.47  &    0.08  &    27.78 &   - \\ 
       Bus &    0.47  &  -  &    1500  &    1875 
     \\
    \end{tabular}
    \label{tab:my_label1}
\end{table}
The safety-critical controller (SCC) is then synthesized by combining the CLF from Lemma 1 and CBF from Lemma 2 as constraints of a quadratic program. Visual summary of the proposed controller is given Fig. \ref{fig:flow_chart}. We are now ready to state our main result. 

\begin{theorem}\label{ours}
    If $V_\eta$ is defined as \eqref{clf_eta_clf}, and $\{B_j\}_{j=1}^n$ are defined as \eqref{def_cbd}, and $u_{FL}(x)$ as \eqref{ufl}, then the following QP
\begin{align}
            u^*(x)=&\argmin_{(u,\delta)\in\mathbb{R}^{2n}}||u-u_{FL}(x)||^2+m||\delta||^2\nonumber\\
        &\text{subject to }\nonumber\\&\hspace{0cm}\gamma(L_fV_\eta(x)+\alpha||\eta(x)||^2)+L_gV_\eta(x)\left(u+\delta\right)\leq\ 0\nonumber,\\
        &\hspace{0cm}\displaystyle L_fB_j(x)+L_gB_j(x)u\leq \frac{\beta}{B_j(x)},\nonumber\\
        &\hspace{0cm}\displaystyle j=1,\dots,n.
        \label{qp2}
\end{align}
where $\gamma(p):=\begin{cases}
    \frac{m+1}{m}p, p\geq0\\
    p, p<0,
\end{cases}$ and $m>0$ stabilizes the single-bus DC microgrid while ensuring the safety of the terminal converter voltages\eqref{original}.  
\end{theorem}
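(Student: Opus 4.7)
The plan is to reduce the statement to a direct application of Theorem \ref{jankovic}, since the QP in \eqref{qp2} has exactly the structure in \eqref{qp} with the ingredients supplied by Lemmas \ref{lemma1} and \ref{lemma2}. First I would collect the hypotheses needed by Theorem \ref{jankovic}: (i) the output--zero decomposition of the DC microgrid is feedback linearizable with stabilizing controller $u_{FL}$ in \eqref{ufl}, and $V_\eta(\eta)=\eta^T P\eta$ is a CLF for \eqref{original} satisfying the inequality \eqref{clf_eta}, both guaranteed by Lemma \ref{lemma1}; (ii) each $B_j$ in \eqref{def_cbd} is a CBF for its own superlevel set $\{b_j\geq0\}$, guaranteed by Lemma \ref{lemma2}. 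With these in place, the CLF constraint and each individual CBF constraint in \eqref{qp2} are exactly of the form invoked in Theorem \ref{jankovic}.

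The first non-routine step is to upgrade Theorem \ref{jankovic} from a single CBF to the $n$ CBFs $\{B_j\}_{j=1}^n$. Since $\mathcal{C}=\bigcap_{j=1}^n\{x\mid b_j(x)\geq0\}$, forward invariance of $\mathcal{C}$ is equivalent to simultaneous forward invariance of every $\{b_j\geq0\}$; enforcing the $j$-th barrier inequality in the QP renders the $j$-th set invariant by item 2 of Theorem \ref{jankovic} applied component-wise. Joint feasibility of the $n$ barrier rows with the CLF row needs to be checked. Here the structure of $g(x)$ in \eqref{original} is decisive: because $v_j$ is only driven by $i_{s_j}$ through $1/C_j$, the barrier row for $B_j$ depends on $u$ only through the single coordinate $u_j$, so the $n$ barrier rows act on disjoint control channels and their pairwise feasibility is automatic. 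The slack variable $\delta$ absorbs any conflict with the CLF row, guaranteeing feasibility of the full QP for all $x\in\mathrm{int}(\mathcal{C})$; Lipschitz continuity of $u^*(x)$ away from $x^*$ then follows from the same KKT argument as in \cite{taylor2023robust}.

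With feasibility and invariance in hand, items 1--3 of Theorem \ref{jankovic} transfer verbatim, giving forward invariance of $\mathcal{C}$ (i.e. safety of the converter terminal voltages) and, whenever all barrier rows are inactive and $\gamma m=1$ is chosen, the CLF decrease $\dot V_\eta\leq-\alpha\|\eta\|^2$. To conclude closed-loop stability I would invoke item 4, whose two preconditions have to be verified in our setting. The first, $x^*\in\mathrm{int}(\mathcal{C})$, is a mild requirement on the operating point selected by \eqref{u^*} and holds whenever $v_{j,\min}<x^*_{2j-1}<v_{j,\max}$. The second, that $V_\eta$ has the continuous control property of Definition \ref{ccp}, follows from the feedback-linearization structure: near $x^*$, $u_{FL}$ is continuous and yields $\dot V_\eta=-\eta^T Q\eta$, so for every $\epsilon>0$ a control within $\epsilon$ of $u^*$ realizes the strict CLF decrease. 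I expect this last verification --- showing the CCP together with inactivity of the barrier rows in a neighborhood of $x^*$ --- to be the main obstacle, because it is where the feedback-linearized CLF and the rational CBFs must be shown to be compatible; once it is settled, item 4 of Theorem \ref{jankovic} delivers local exponential stability and the proof is complete.
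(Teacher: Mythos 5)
Your proposal is correct and follows essentially the same route as the paper: both reduce Theorem~\ref{ours} to Theorem~\ref{jankovic} by taking the CLF from Lemma~\ref{lemma1} and the CBFs from Lemma~\ref{lemma2}, and both rest on the same key observation that $L_gB_j u$ involves only the single channel $u_j$ (since $B_j$ depends only on $v_j=x_{2j-1}$ and the corresponding row of $g$ has $1/C_j$ in column $j$ alone), so the $n$ barrier rows decouple and Theorem~\ref{jankovic} applies to each $(V_\eta,B_j)$ pair. Your explicit verification of the item-4 preconditions (the CCP of $V_\eta$ and $x^*\in\mathrm{int}(\mathcal{C})$) goes beyond what the paper writes down, but it is a refinement of, not a departure from, the paper's argument.
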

\begin{proof}
    $V_\eta$ is a CLF from Lemma \ref{lemma1} and $\{B_j\}_{j=1}^n$ are CBFs from Lemma \ref{lemma2}. Notice that each condition on $B_j$  only constrains the corresponding control input $u_j$ resulting in the CBF conditions being independent from one another. Then, applying Theorem \ref{jankovic} to each $V_\eta$ and $B_j$ pair to solve for $u^* _j$, we recover the result. 
\end{proof}
\vspace{-10pt}
\section{Numerical Experiments}
\begin{figure*}[hbpt]
\begin{multicols}{3}
    \includegraphics[width=\linewidth]{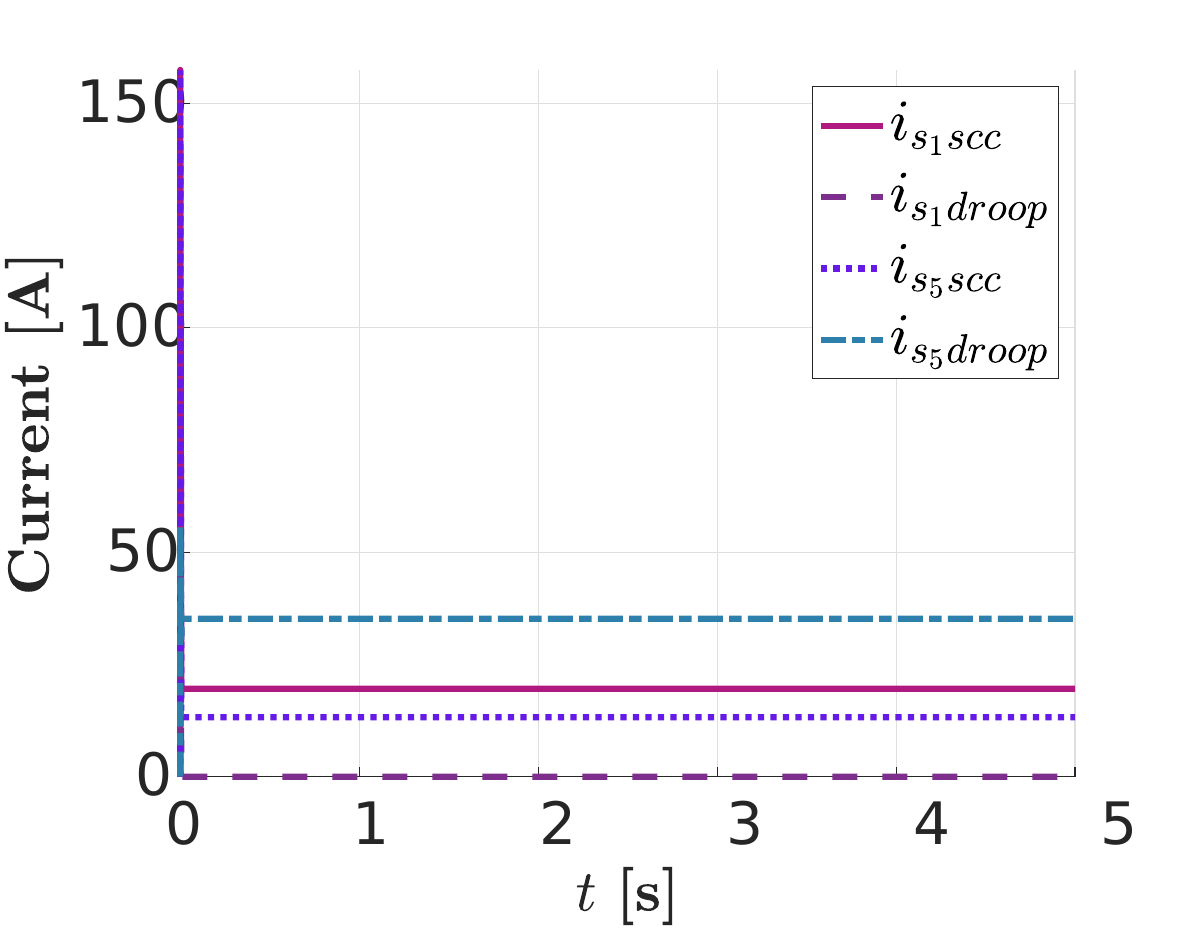}\par
    \includegraphics[width=\linewidth]{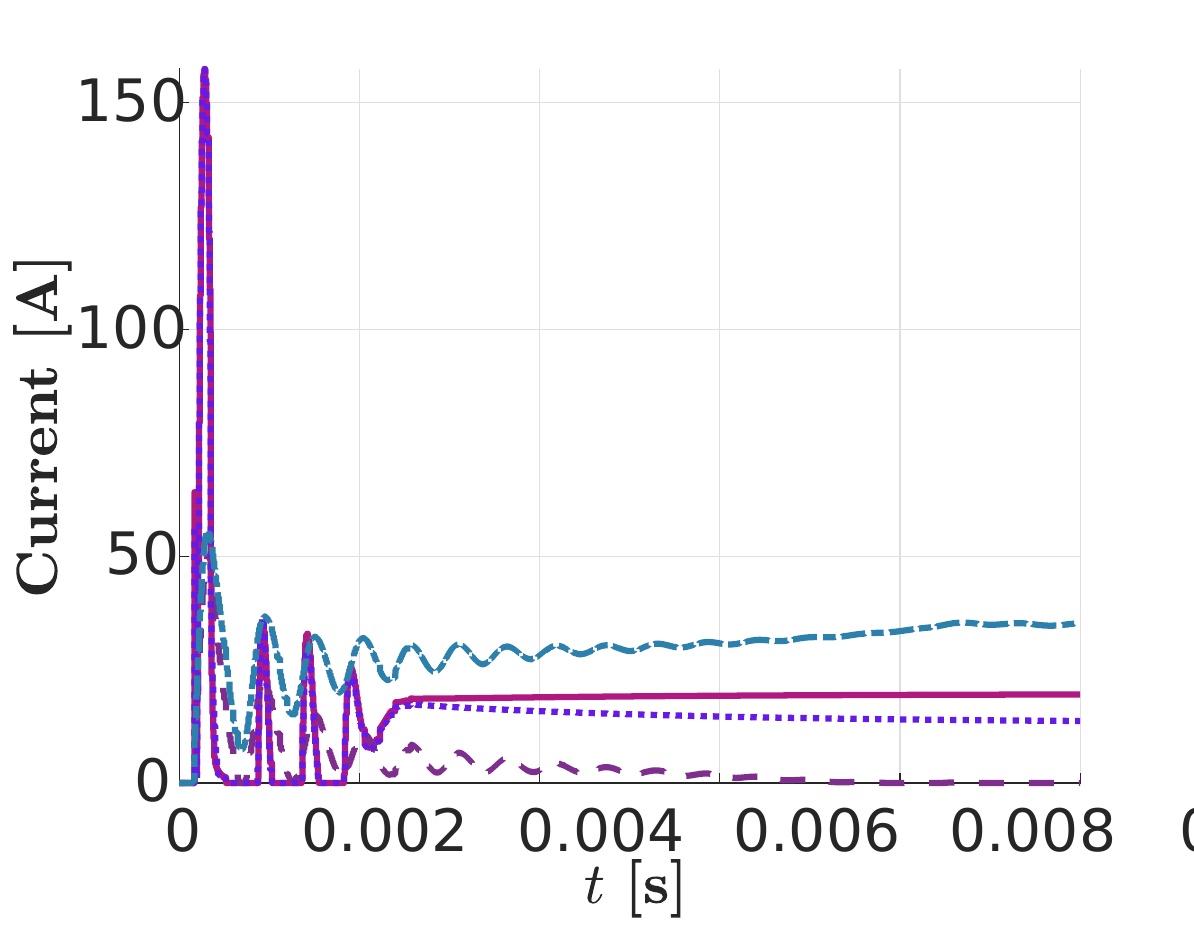}\par 
    \includegraphics[width=\linewidth]{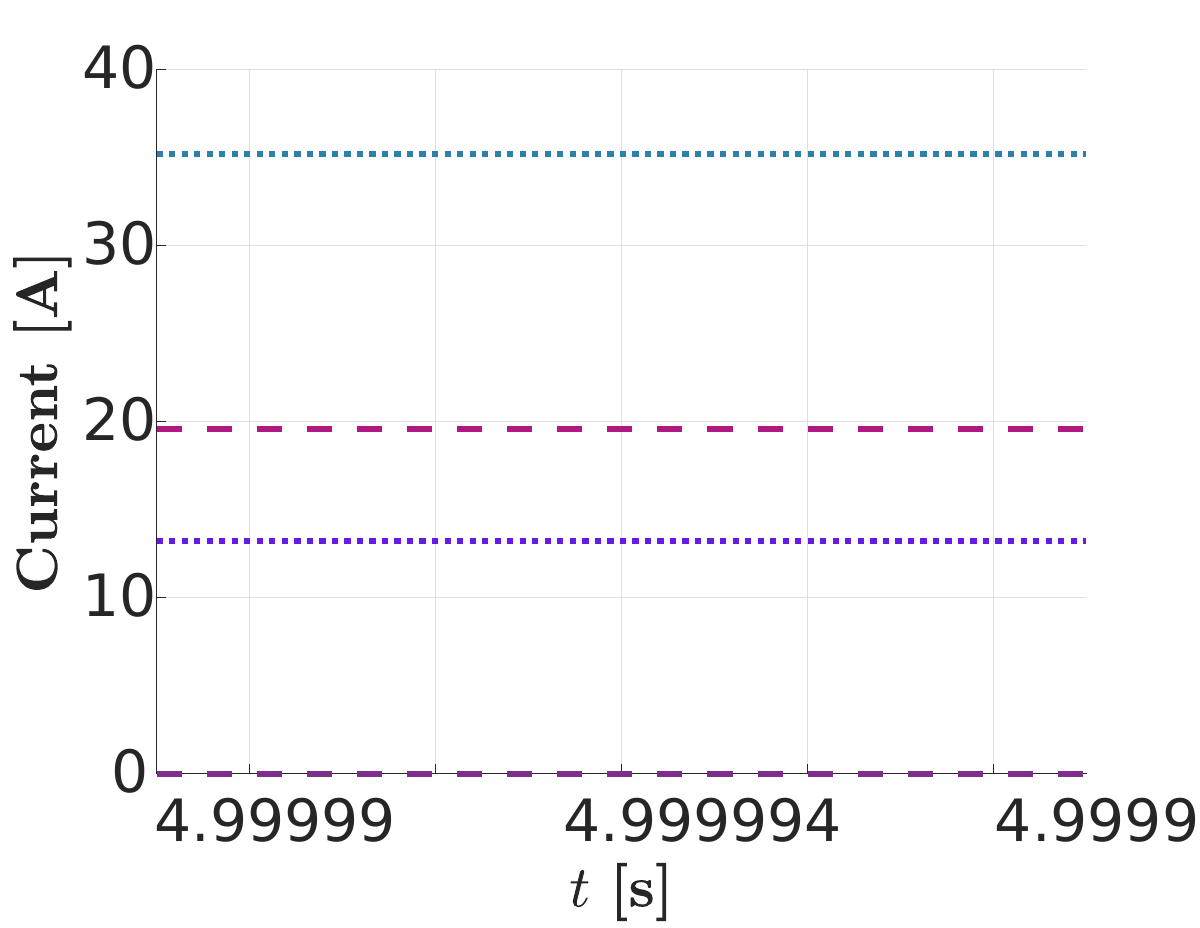}\par 
\end{multicols}
\begin{multicols}{3}
   \includegraphics[width=\linewidth]{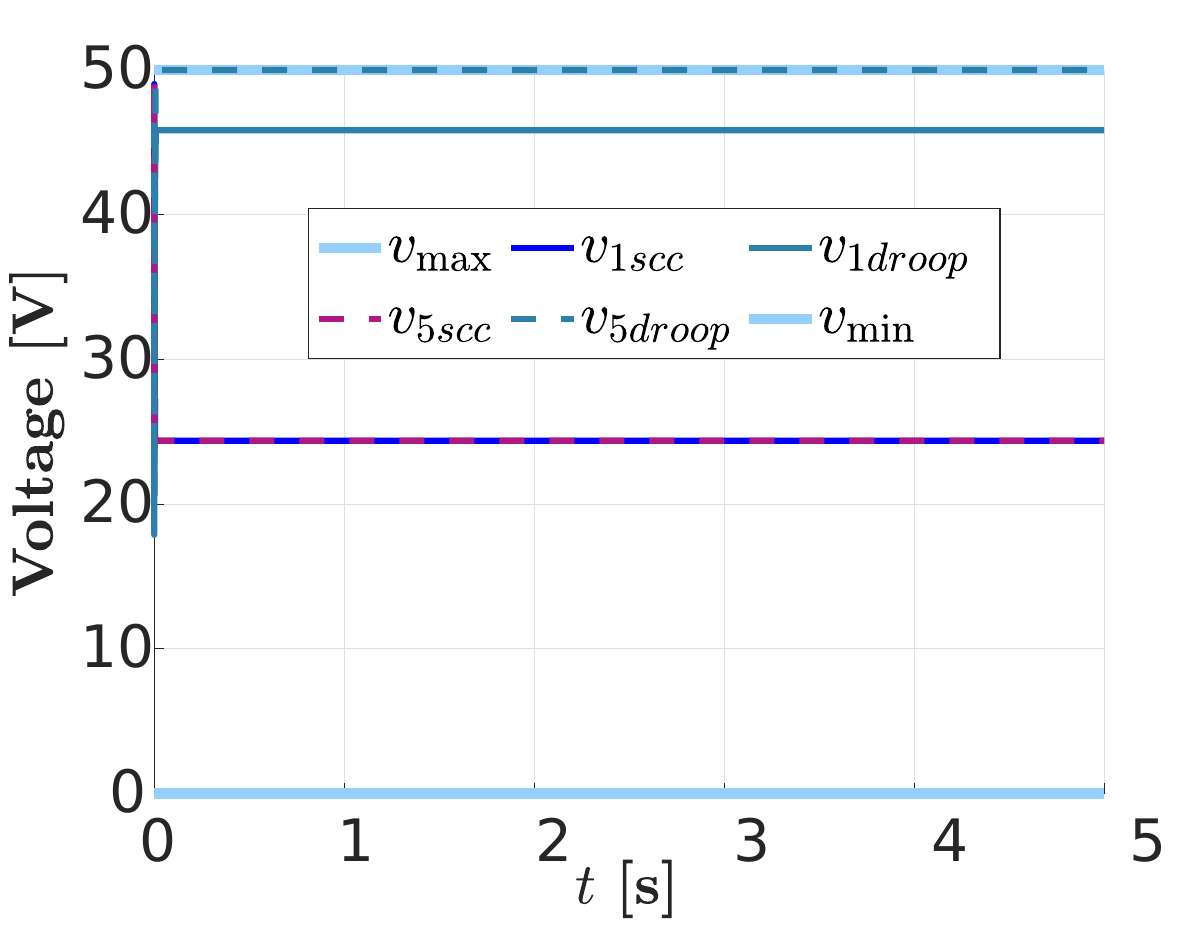}\par 
     \includegraphics[width=\linewidth]{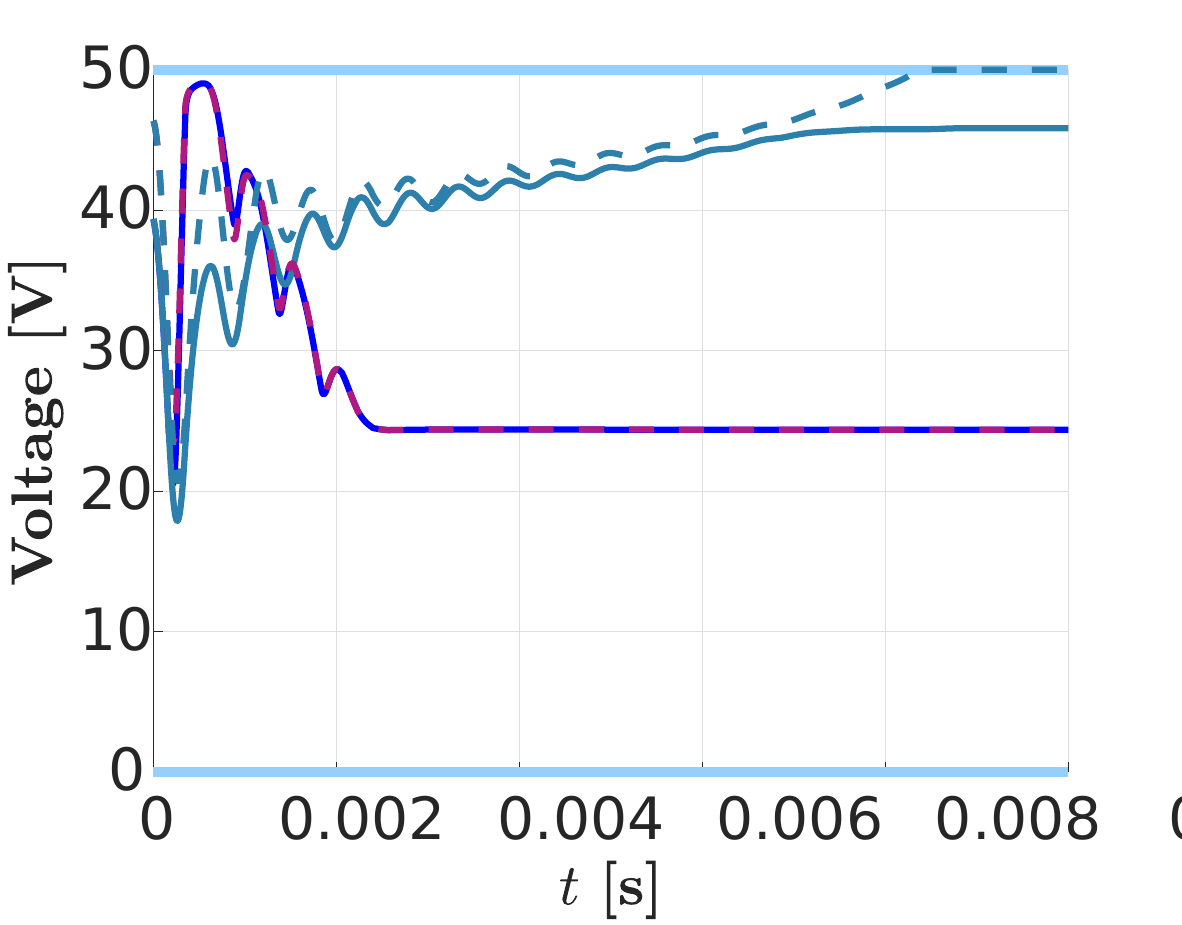}\par
    \includegraphics[width=\linewidth]{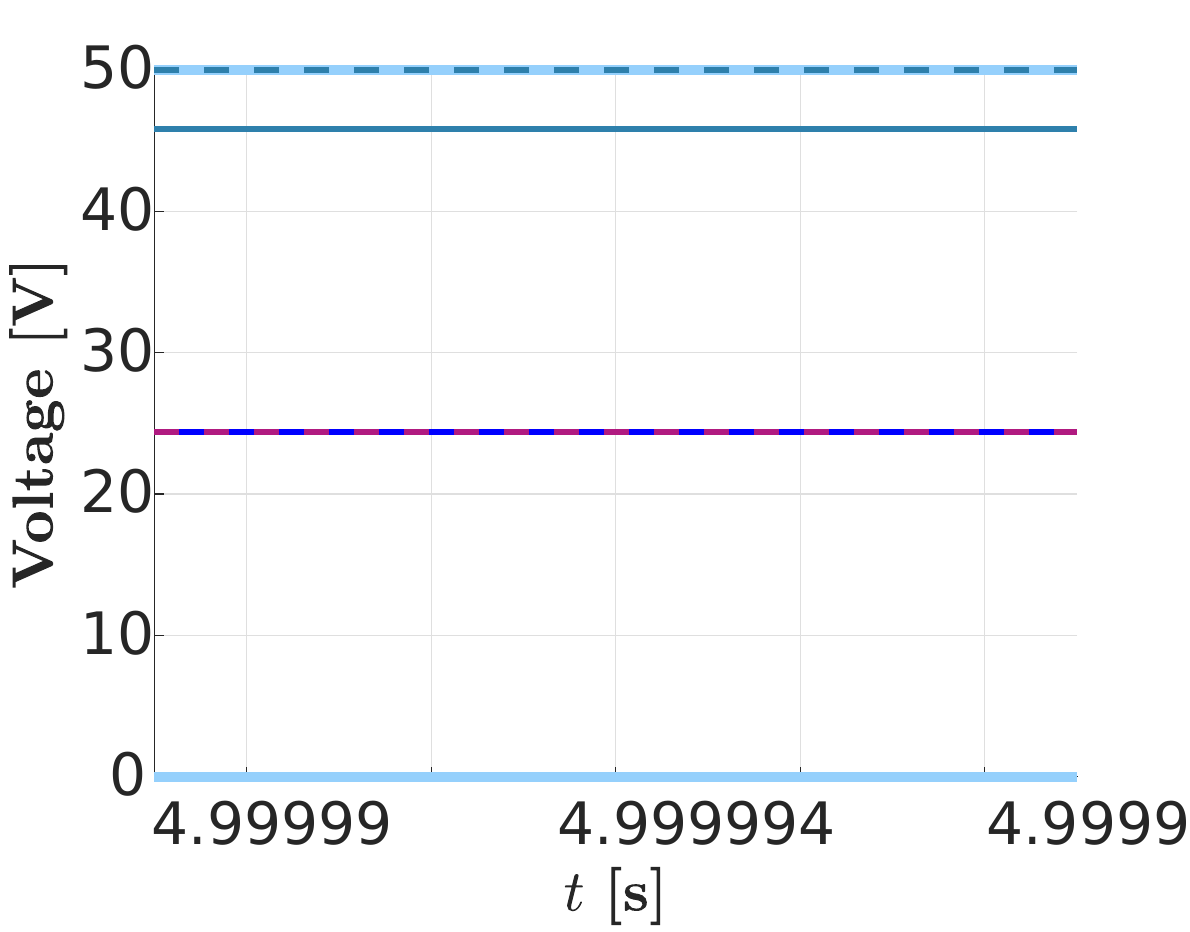}\par 
\end{multicols}
\begin{multicols}{3}
   \includegraphics[width=\linewidth]{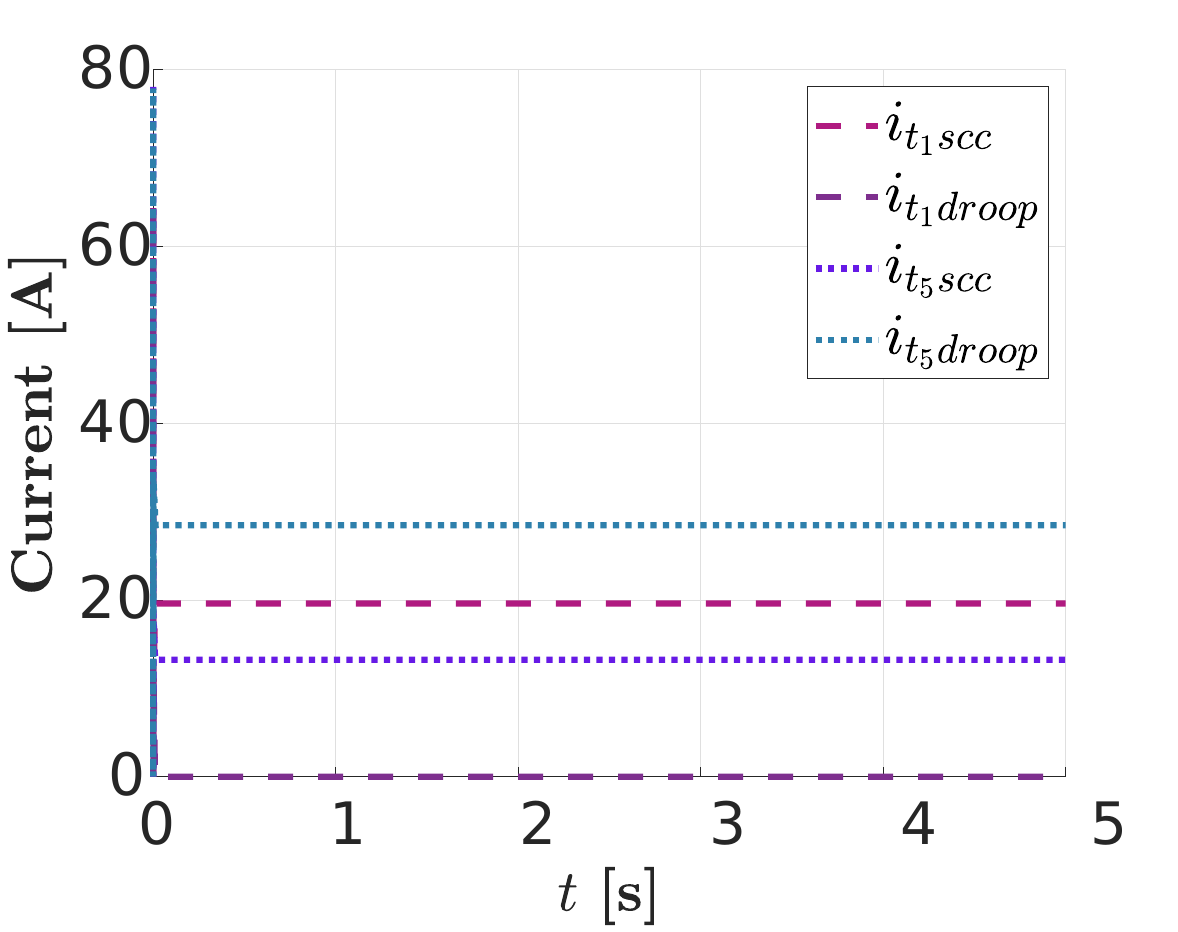}\par 
    \includegraphics[width=\linewidth]{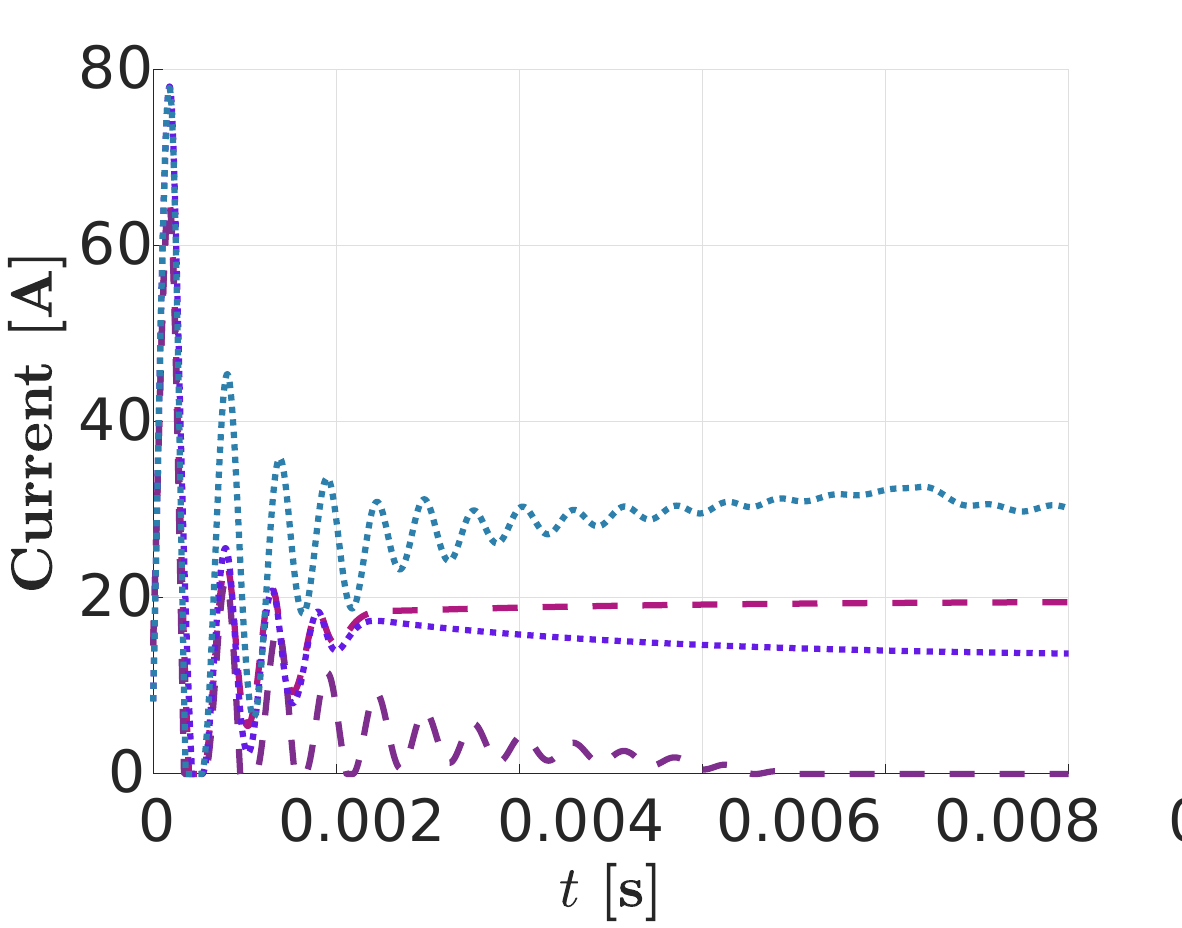}\par 
     \includegraphics[width=\linewidth]{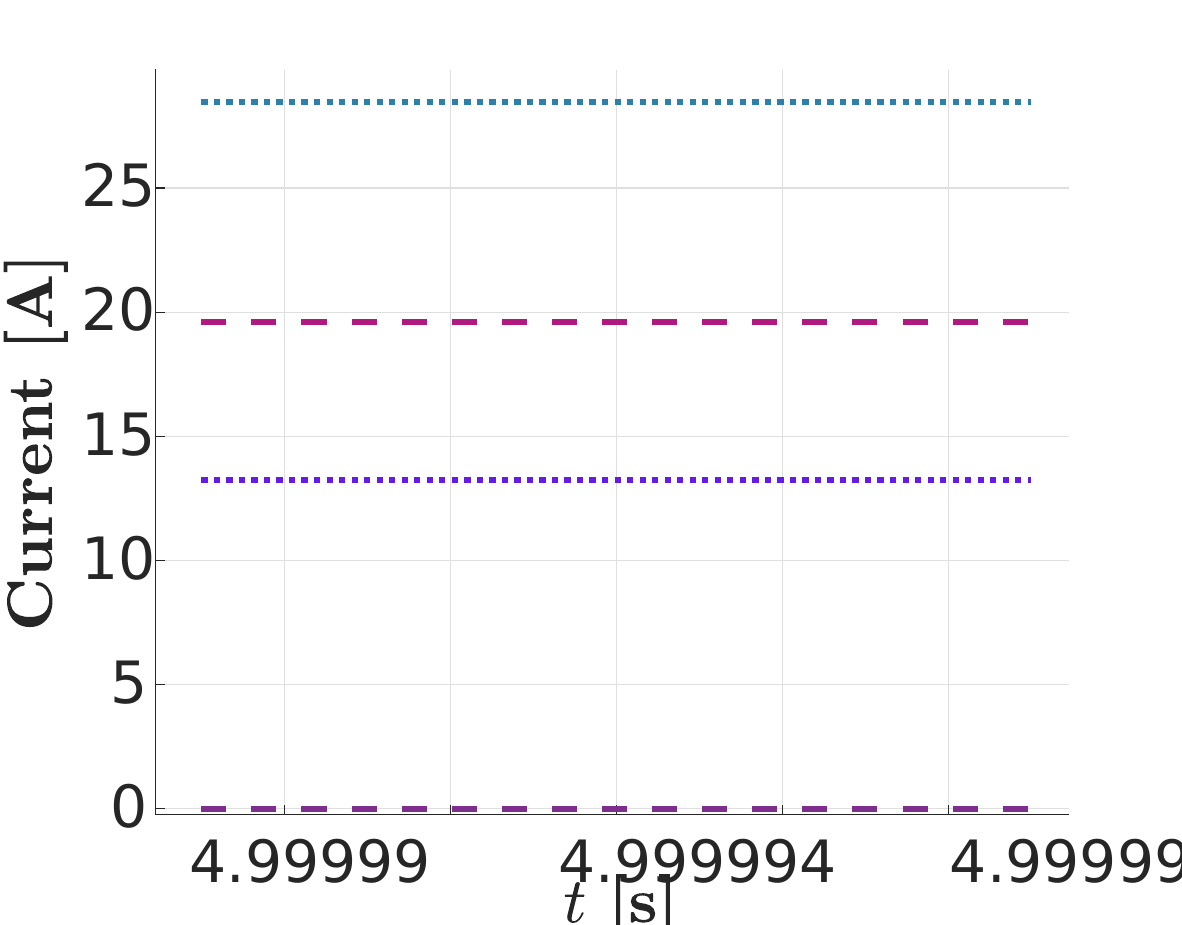}\par

\end{multicols}

\begin{multicols}{3}
    \includegraphics[width=\linewidth]{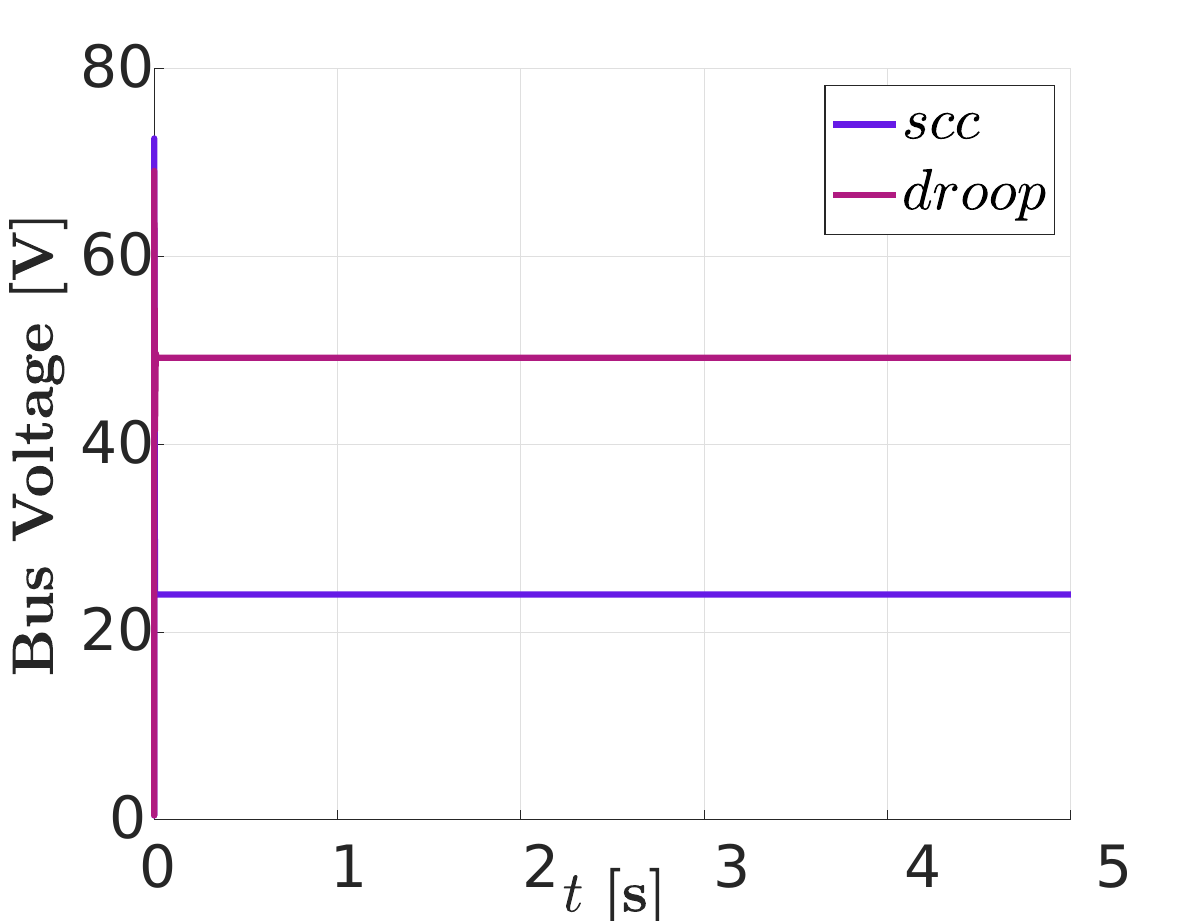}\caption*{(a) Full-length state trajectories}\label{a}\par 
     \includegraphics[width=\linewidth]{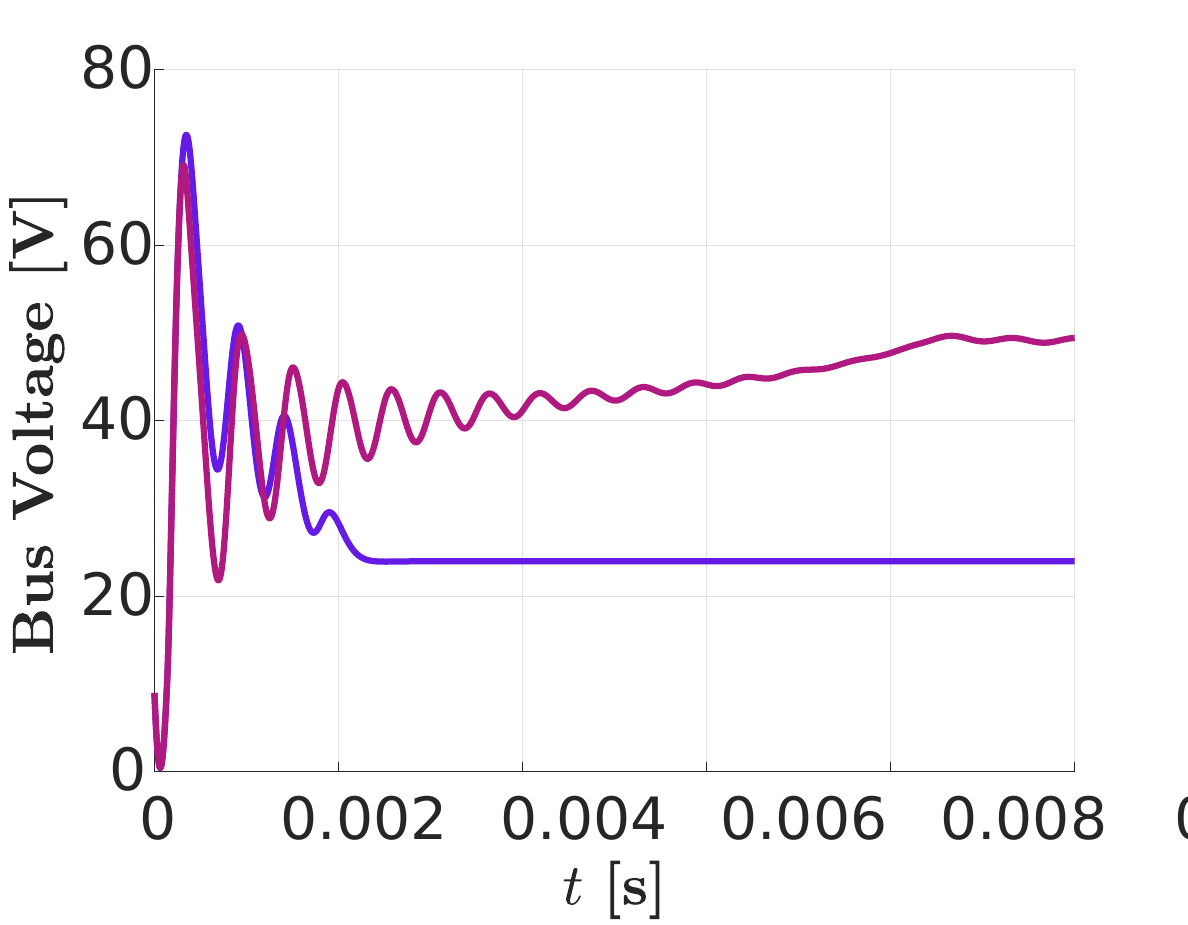}\caption*{(b) Transient behavior}\label{b}\par
    \includegraphics[width=\linewidth]{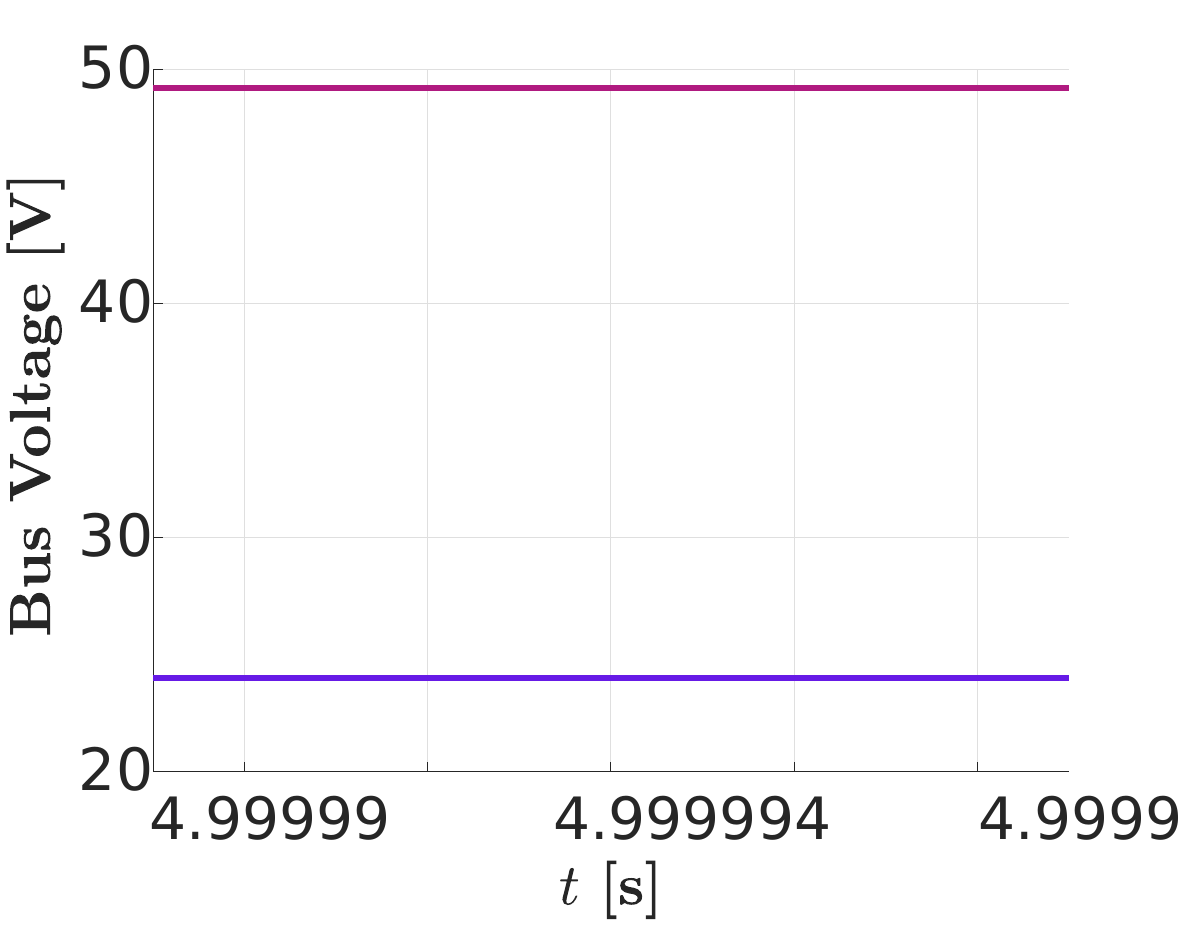}\caption*{(c) Steady-state behavior}\label{c}\par 
\end{multicols}

\caption{Comparison of the performance of the proposed SCC against the optimized robust droop controller in safe regulation of the single-bus DC microgrid to its desired steady-state bus voltage. Note that only select states and control inputs are shown for conciseness. The dynamics was updated every 1$\mu$s, while the controllers were updated every 10$\mu$s to emulate the practical implementation.}\label{fig:4}
\end{figure*}

We consider a single-bus DC microgrid comprising five source-interface DC/DC converters supplying a CPL and a resistive load in parallel. The system parameters are provided in Table \ref{tab:my_label1}. The primary control objectives are: (i) ensuring that the output voltages of the source-interface converters remain within their specified safety limits, and (ii) regulating the transmission line currents and bus voltage to their desired equilibrium values, as listed in the first two columns of Table \ref{tab:my_label}.  
The simulation results, presented in Fig. \ref{fig:4}, compare the performance of the proposed SCC against a state-of-the-art optimized droop controller \cite{herrera2015stability}, which enforces voltage constraints by saturating at the upper and lower safety bounds. The SCC's design hyperparameter $m$ is selected to be large enough to guarantee convergence to the equilibrium for the given initial conditions, but small enough so that the controller doesn't do huge jumps in value. The steady-state values achieved by the SCC and the droop controller are reported in the second-to-last and last columns of Table \ref{tab:my_label}, respectively.  
The results demonstrate that the proposed SCC effectively stabilizes the system and ensures voltage safety, even when initialized far from equilibrium. In contrast, the optimized droop controller \cite{herrera2015stability} struggles to maintain stability under certain conditions.

\section{Conclusion}
\vspace{-10pt}
This work presented an online safety-critical control (SCC) framework for single-bus DC microgrids, designed to ensure both stability and operational safety in real time. The numerical results demonstrate that the SCC outperforms a robust droop controller, achieving safer and more reliable operation across a broader set of initial conditions. The future work will focus on extending this framework to multi-bus DC microgrids, where dynamic interactions between multiple CPLs and converters present additional stability challenges.

\setlength{\tabcolsep}{4pt}
\begin{table}[t]
    \centering
    \caption{Results of the numerical experiment tabulated. The desired bus voltage is specified to be 24\,V, the remaining steady-state values are obtained from applying \eqref{u^*} to Table \ref{tab:my_label1} parameters. Initial states are picked to be far enough away from the corresponding equilibria. The steady-state values are averaged over 10$\mu$s in case there are remaining transient harmonics.}
    \begin{tabular}{c|c|c|c|c|c}
       & & & Averaged  & Averaged \\
    
      State  & Equilibrium   &  Initial  state  &   steady-state   &   steady-state & Units 
   \\ 
       &   &   &  by SCC &  by droop  
       \\
     \hline
    $v_1$ &    24.37 & 39.37& 24.37& 45.83&[V] \\ 
     
    $i_{s_1}$ &   19.61 &-&19.61& 0.00&[A]\\
    $i_{t_1}$ &  19.61 & 14.61&19.61&0.00&[A] \\\hline
   $v_2 $&    24.37 & 46.37&24.37& 49.95&[V]\\
   $i_{s_2}$ &   20.71 & -&20.71& 41.59&[A]\\ 
    $i_{t_2}$ &   20.71 & 15.71&20.71& 41.59 &[A]\\\hline 
    $v_3$   &   24.37 & 9.37&24.37&49.22&[V] \\
   $i_{s_3}$ &   21.94 & -&21.94& 0.86&[A]\\
   $i_{t_3}$ &   21.94 & 16.94&21.94&0.86&[A] \\\hline
    $v_4$  &   24.37 & 39.37&24.37&42.10&[V]  \\
   $i_{s_4}$ &   18.61 & -&18.61& 0.00&[A]\\ 
    
     $i_{t_4}$&   18.61 & 13.61&18.61& 0.00& [A]\\\hline 
     
    $v_5$  &   24.37 & 46.37&24.37& 50.00&[V]\\
   $i_{s_5}$ &   13.25 & -&13.25& 35.20&[A]\\ 
     
      $i_{t_5}$ &   13.25 & 8.25&13.25&28.46 &[A]\\ 
     \hline
    $v_{bus}$ &   24.00 & 9.00&24.00&49.21&[V]   
     \\
    \end{tabular}
    \label{tab:my_label}
\end{table}
\bibliographystyle{IEEEtran}
\bibliography{acc_main_v12}
\end{document}